\newcommand{\ceil}[1]{\left\lceil #1\right\rceil}
\newcommand{\set}[1]{\left\{ #1\right\}}
\newcommand{\gilt}{:}
\newcommand{\sodass}{\,:\,}
\newcommand{\setGilt}[2]{\left\{ #1\sodass #2\right\}}
\newcommand{\realrange}[2]{\left[#1, #2\right]}
\newcommand{\unitrange}[2]{\realrange{0}{1}}
\newcommand{\Oh}[1]{\mathcal{O}\!\left( #1\right)}
\newcommand{\llabel}[1]{\label{\labelprefix:#1}}
\newcommand{\labelprefix}{} 
\newcommand{\discussionsize}{\small}
\newcommand{\frage}[1]{}
\newenvironment{code}{\noindent
\begin{tabbing}%
\hspace{2em}\=\hspace{2em}\=\hspace{2em}\=\hspace{2em}\=\hspace{2em}\=%
\hspace{2em}\=\hspace{2em}\=\hspace{2em}\=\hspace{2em}\=\hspace{2em}\=%
\kill}{\end{tabbing}}
\newcommand{\labelcommand}{}
\newcommand{\captiontext}{}
\newsavebox{\codeparam}
\newcounter{lineNumber}
\newenvironment{disscodepos}[3]{%
\renewcommand{\labelcommand}{#2}%
\renewcommand{\captiontext}{#3}%
\sbox{\codeparam}{\parbox{\textwidth}{#3}}%
\begin{figure}[#1]\begin{center}\begin{code}\setcounter{lineNumber}{1}}{%
\end{code}\end{center}\caption{\llabel{\labelcommand}\captiontext}\end{figure}}
\newcommand{\For}      {{\bf for\ }}
\newcommand{\If}       {{\bf if\ }}
\newcommand{\Is}       {:=}
\newdimen\endofsize\endofsize=0.5em
\def\endofbeweis{~\quad\hglue\hsize minus\hsize
                 \hbox{\vrule height \endofsize width
\endofsize}\par}
\newcommand{\removelatexerror}{\let\@latex@error\@gobble}
\newcommand{\Inc}{{++}}
\newcommand{\ie}{i.e.\ }
\newcommand{\etal}{et~al.}
\def\MdR{\ensuremath{\mathbb{R}}}
\newcommand{\mytitle}{Scalable Edge Partitioning}
\begin{document}

\title{\mytitle\thanks{Partially supported by DFG grant SA 933/10-2. The research leading to these results has received funding from the European Research Council under the European Union's Seventh Framework Programme (FP/2007-2013) / ERC Grant Agreement no. 340506.}}

\author{Sebastian Schlag\thanks{Institute for Theoretical Informatics, Karlsruhe Institute of Technology, Karlsruhe, Germany.} \\
\and
Christian Schulz\thanks{Faculty of Computer Science, University of Vienna, Vienna, Austria.} \\
\and
Daniel Seemaier\thanks{Institute for Theoretical Informatics, Karlsruhe Institute of Technology, Karlsruhe, Germany.} \\
\and
Darren Strash\thanks{Department of Computer Science, Hamilton College, Clinton, NY, USA.}}
\date{}

\maketitle

\fancyfoot[C]{\thepage}

\thispagestyle{fancy}

\begin{abstract} 
Edge-centric distributed computations have appeared as a recent technique to improve the shortcomings of think-like-a-vertex algorithms on large scale-free networks. In order to increase parallelism on this model, \emph{edge partitioning}---partitioning edges into roughly equally sized blocks---has emerged as an alternative to traditional (node-based) graph partitioning.
In this work, we give a distributed memory parallel algorithm to compute high-quality edge partitions  in a scalable way.
Our algorithm scales to networks with billions of edges, and runs efficiently on thousands of PEs.
Our technique is based on a fast parallelization of split graph construction and a use of advanced 
node partitioning algorithms. 
Our extensive experiments show that our algorithm 
has high quality on large real-world networks
and large hyperbolic random graphs---which have a power law degree distribution and are therefore specifically
targeted by edge partitioning.
\end{abstract}
\clearpage

\pagestyle{fancy}
\section{Introduction}
With the recent stagnation of Moore's law, the primary method for gaining computing power is to increase the number of available cores, processors, or networked machines (all of which are generally referred to as \emph{processing elements} (PEs)) and exploit parallel computation. One increasingly useful method to take advantage of parallelism is found in \emph{graph partitioning}~\cite{schloegel2000gph,GPOverviewBook,SPPGPOverviewPaper}, which attempts to partition the vertices of a graph into roughly equal disjoint sets (called \emph{blocks}), while minimizing some objective function---for example minimizing the number of edges crossing between blocks. Graph partitioning is highly effective, for instance, for distributing data to PEs in order to minimize communication volume~\cite{hendrickson2000graph}, and to minimize the overall running time of jobs with dependencies between~computation~steps~\cite{gpApplicationPaper}.

This traditional (node-based) graph partitioning has also been essential for making efficient distributed graph algorithms in the Think Like a Vertex (TLAV) model of computation~\cite{mccune-tlav-2015}. In this model, node-centric operations are performed in parallel, by mapping nodes to PEs and executing node computations in parallel. Nearly all algorithms in this model require information to be communicated between neighbors --- which results in network communication if stored on different PEs --- and therefore high-quality graph partitioning directly translates into less communication and faster overall running time. As a result, graph partitioning techniques are included in popular TLAV platforms such as Pregel~\cite{pregel-paper} and GraphLab~\cite{graphlab-paper}. 

However, node-centric computations have serious shortcomings on power law graphs --- which have a skewed degree distribution. In such networks, the overall running time is negatively affected by very high-degree nodes, which can result in more communication steps. To combat these effects, Gonzalez~\etal~\cite{gonzalez-powergraph-2012} introduced edge-centric computations, which duplicates node-centric computations across edges to reduce communication overhead. In this model, \emph{edge partitioning}---partitioning edges into roughly equally sized blocks---must be used to reduce the overall running time. 
This variant of partitioning is also NP-hard~\cite{bourse-2014}.

Similar to (node-based) graph partitioning, the quality of the edge partitioning can have a dramatic effect on parallelization~\cite{li2017spac}.
Noting that edge partitioning can be solved directly with hypergraph partitioners, such as hMETIS~\cite{hMetisRB,hMetisKway} and PaToH~\cite{PaToH}, Li~\etal~\cite{li2017spac} showed that these techniques give the highest quality partitionings; however, they are also slow.
Therefore, a balance of solution quality and speed must be taken into consideration. This balance is struck well for
the split-and-connect (SPAC) method introduced by Li~\etal~\cite{li2017spac}. In the SPAC method, vertices are duplicated and weighted so that a (typically fast) standard node-based graph partitioner can be used to compute an edge partitioning; however, this method was only studied in the sequential setting. While this is a great initial step, the graphs that benefit the most from edge partitioning are massive---and therefore do not fit on a single machine~\cite{shp-vldb}.

However, distributed algorithms for the problem fare far worse~\cite{gonzalez-powergraph-2012,bourse-2014}. While adding much computational power with many processing elements (PEs), edge partitioners such as  PowerGraph~\cite{gonzalez-powergraph-2012} and Ja-Be-Ja-VC~\cite{fatemeh2014jabejavc}, produce partitionings of significantly worse quality than those produced with hypergraph partitioners or SPAC. Thus, there is no clear winning algorithm that gives high quality while executing quickly in a distributed setting.

\subsection{Our Results.}
In this paper, we give the first \emph{high-quality} distributed memory parallel edge partitioner. Our algorithm scales to networks with billions of edges, and runs efficiently on thousands of PEs.
Our technique is based on a fast parallelization of split graph construction and a use of advanced 
node partitioning algorithms. Our experiments show that while hypergraph partitioners outperform SPAC-based graph partitioners in the sequential setting regarding both solution quality and running time,
our new algorithms compute significantly \emph{better} solutions than the distributed memory hypergraph partitioner Zoltan~\cite{Zoltan} in \emph{shorter} time.
For large random hyperbolic graphs, which have a power law degree distribution and are therefore specifically targeted by edge partitioning,
our algorithms compute solutions that are more than a factor of two better.
Moreover, our techniques scale well to \numprint{2560} PEs, allowing for efficient partitioning of graphs with billions of edges within seconds.

\section{Preliminaries}
\subsection{Basic Concepts.}
Let  $G=(V=\{0,\ldots, n-1\},E,c,\omega)$ be an undirected graph 
with edge weights $\omega: E \to \MdR_{>0}$, node weights
$c: V \to \MdR_{\geq 0}$, $n = |V|$, and $m = |E|$.
We extend $c$ and $\omega$ to sets, i.e.,
$c(V')\Is \sum_{v\in V'}c(v)$ and $\omega(E')\Is \sum_{e\in E'}\omega(e)$.
$N(v)\Is \setGilt{u}{\set{v,u}\in E}$ denotes the neighbors of $v$ and $E(v) := \{ e : v \in e \}$ denotes the edges incident to $v$.
A node $v \in V_i$ that has a neighbor $w \in V_j, i\neq j$, is a \emph{boundary node}. 
We are looking for \emph{blocks} of nodes $V_1$,\ldots,$V_k$ 
that partition $V$, i.e., $V_1\cup\cdots\cup V_k=V$ and $V_i\cap V_j=\emptyset$
for $i\neq j$. The \emph{balance constraint} demands that 
$\forall i\in \{1..k\}\gilt c(V_i) \leq L_{\max} := (1+\varepsilon)\lceil\frac{c(V)}{k}\rceil$
for some imbalance parameter $\varepsilon$. 
The objective is to minimize the total \emph{cut} $\sum_{i<j}w(E_{ij})$ where 
$E_{ij}\Is\setGilt{\set{u,v}\in E}{u\in V_i,v\in V_j}$. 
Similar to the node partitioning problem, the \emph{edge partitioning problem} asks for blocks of edges $E_1, \dots, E_k$ that partition $E$, i.e. $E_1 \cup \dots \cup E_k = E$ and $E_i \cap E_j = \emptyset$ for $i \neq j$. 
The \emph{balance constraint} demands that $\forall i \in \{1..k\}: \omega(E_i) \le (1 + \varepsilon) \ceil{\frac{\omega(E)}{k}}$.
The objective is to minimize the \emph{vertex cut} $\sum_{v \in V} |I(v)| - 1$ where $I(v) := \{ i : E(v) \cap E_i \neq \emptyset \}$. 
Intuitively, the objective expresses the number of required \emph{replicas} of nodes: 
if a node $v$ has to be copied to each edge partition that has edges incident to $v$, the number of required replicas of that node is $|I(v)| - 1$.

A \emph{clustering} is also a partition of the nodes. However, $k$ is usually not given in advance and the balance constraint is removed. 
A size-constrained clustering constrains the size of the blocks of a clustering by a given upper bound $U$ such that $c(V_i) \leq U$. 
Note that by adjusting the upper bound one can somewhat control the number of blocks of a feasible clustering. 

\subsection{Hypergraphs.}
An \textit{undirected hypergraph} $H=(V,E,c,\omega)$ is defined as a set of $n$ vertices $V$ and a
set of $m$ hyperedges/nets $E$ with vertex weights $c:V \rightarrow \mathbb{R}_{>0}$ and net 
weights $\omega:E \rightarrow \mathbb{R}_{>0}$, where each net is a subset of the vertex set $V$ (i.e., $e \subseteq V$). The vertices of a net are called \emph{pins}.
As before, $c$ and $\omega$ are extended to work on sets.
A vertex $v$ is \textit{incident} to a net $e$ if $v \in e$.
The \textit{size} $|e|$ of a net $e$ is the number of its pins. 
A \emph{$k$-way partition} of a hypergraph $H$ is a partition of its vertex set into $k$ \emph{blocks} $\mathrm{\Pi} = \{V_1, \dots, V_k\}$ 
such that $\bigcup_{i=1}^k V_i = V$, $V_i \neq \emptyset $ for $1 \leq i \leq k$ and $V_i \cap V_j = \emptyset$ for $i \neq j$.
A $k$-way partition $\mathrm{\Pi}$ is called \emph{$\mathrm{\varepsilon}$-balanced} if each block $V_i \in \mathrm{\Pi}$ satisfies the \emph{balance constraint}:
$c(V_i) \leq L_{\max} := (1+\varepsilon)\lceil \frac{c(V)}{k} \rceil$ for some parameter $\mathrm{\varepsilon}$. 
Given a $k$-way partition $\mathrm{\Pi}$, the number of pins of a net $e$ in block $V_i$ is defined as
$\mathrm{\Phi}(e,V_i) := |\{v \in V_i~|~v \in e \}|$. 
For each net $e$, $\mathrm{\Lambda}(e) := \{V_i~|~ \mathrm{\Phi}(e, V_i) > 0\}$ denotes the \emph{connectivity set} of $e$.
The \emph{connectivity} of a net $e$ is the cardinality of its connectivity set: $\mathrm{\lambda}(e) := |\mathrm{\Lambda}(e)|$.
A net is called \emph{cut net} if $\mathrm{\lambda}(e) > 1$.
The \emph{$k$-way hypergraph partitioning problem} is to find an $\varepsilon$-balanced $k$-way partition $\mathrm{\Pi}$ of a hypergraph $H$ that
minimizes an objective function over the cut nets for some $\varepsilon$.
The most commonly used cost functions are the \emph{cut-net} metric $\text{cut}(\mathrm{\Pi}) := \sum_{e \in E'} \omega(e)$ and the
\emph{connectivity} metric $(\mathrm{\lambda} - 1)(\mathrm{\Pi}) := \sum_{e\in E'} (\mathrm{\lambda}(e) -1)~\omega(e)$, where $E'$ is the set of all cut nets~\cite{UMPa,donath1988logic}.
Optimizing both objective functions is known to be NP-hard \cite{Lengauer:1990}.

\subsection{The Split-and-Connect (SPAC) Method.}\label{subs:spac}

If the input graph fits into memory, edge partitioning can be solved by forming a hypergraph with a node for each edge in $E$, and a hyperedge for each node, consisting of the edges to which it is incident. Thus, hypergraph partitioners optimizing the connectivity metric can be applied to this problem~\cite{li2017spac}; however, they are more powerful than necessary, as this conversion has mostly small hyperedges.

\begin{figure}[t]
    \centering
    \begin{subfigure}[b]{0.24\textwidth}
        \centering 
        \includegraphics[page=1]{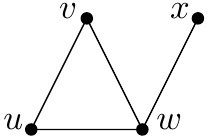}
        \caption{}
    \end{subfigure}
    \begin{subfigure}[b]{0.24\textwidth}
        \centering 
        \includegraphics[page=2]{assets/splitgraph_example}
        \caption{}
    \end{subfigure}
    \caption{(a) The input graph. (b) The resulting split graph. Each node $v$ is replaced by a set $S_v$ of split nodes that form a cycle (if $|S_v| \geq 3$). Auxiliary edges are drawn thin, dominant edges are drawn thick.}
    \label{fig:splitgraph_example}
\end{figure}

The problem can also be solved with node-based graph partitioning by creating a new graph $G'$ with the \emph{split-and-connect transformation} (SPAC) of Li~\etal~\cite{li2017spac}. 
More precisely, given an undirected, unweighted graph $G = (V, E)$, they construct the split graph $G' = (V', E', c', \omega')$ as follows: 
for each node $v \in V$, create a set of \emph{split nodes} $S_v := \{v'_1, \dots, v'_{d(v)}\}$ that are connected to a cycle by \emph{auxiliary edges} with edge-weight one, i.e. edges $\{v'_i, v'_{i + 1}\}$ for $i = 1, \dots, d(v) - 1$ and $\{v'_{d(v)}, v'_1\}$. 
In the connect phase, split nodes are connected by edges, \ie for each edge $e = \{u,v\}$ in $G$, a corresponding \emph{dominant edge} $\{u',v'\}$ in $G'$ is created.
This is done such that overall both $u' \in S_u$ and $v' \in S_v$ are connected to one and only one dominant edge.
Those dominant edges get assigned edge weight infinity. Figure~\ref{fig:splitgraph_example} gives an example.

Note that the original version of the SPAC method connects the split nodes of the same split set to an induced path rather than a cycle. Exchanging them for cycles simplifies implementation and does not change the theoretical approximation bound~\cite{li2017spac}. 

To partition the edges of $G$, a node-based partitioning algorithm is run on $G'$. 
Since dominant edges have edge weight infinity, it is infeasible to cut dominant edges for the node-based partitioning algorithm. Thus, both endpoints of a dominant edge are put into the same partition. 
To obtain an edge partition of the input graph, one transfers the block numbers of those endpoints to the edge in $G$ that~induced~the~dominant~edge.  

Each distinct node partition occurring in a split node set cuts at least one auxiliary edge, unless the set is fully contained in a single partition. 
Hence, the number of node replicas is at most the number of edge cuts.

Since the vertex cut is always smaller than or equal to the edge cut, a good node partition of the split graph intuitively leads to a good edge partition of the input graph. 
Note that the node-based partitioning algorithm is a parameter of the algorithm.
Overall, their approach is shown to be up to orders of magnitude faster than the hypergraph partitioning approaches using hMETIS~\cite{hMetisManual} and PaToH~\cite{PaToHManual} and considered competitive in terms of the number of replicas~\cite{li2017spac}.

\section{Related Work}

There has been a \emph{huge} amount of research on graph and hypergraph partitioning so that we refer the reader to existing literature \cite{schloegel2000gph,GPOverviewBook,SPPGPOverviewPaper,Papa2007} for most of the material. 
Here, we focus on issues closely related to our main contributions. Since both graph partitioning algorithms using the method by Li~\etal~\cite{li2017spac} and hypergraph partitioning algorithms are useful to solve the edge partitioning problem, we start this section with reviewing literature for those problems and then finish with edge partitioning algorithms.

\subsection{Node Partitioning.} All general-purpose methods that are able to obtain high-quality node partitions for large real-world graphs are based on the multilevel method. 
In the \emph{multilevel graph partitioning} (MGP) method, the input graph is recursively \emph{contracted} to achieve smaller graphs which should reflect the same basic structure as the input graph. After applying an \emph{initial partitioning} algorithm to the smallest graph, the contraction is undone and, at each level, a
\emph{local search} method is used to improve the partitioning induced by the coarser level.
Well-known software packages based on this approach include KaHIP~\cite{kabapeE}, Jostle~\cite{Walshaw07}, METIS~\cite{karypis1998fast} and Scotch \cite{ptscotch}.

Most probably the fastest available parallel code is the parallel version of METIS, ParMETIS~\cite{karypis1996parallel}.
This parallelization has difficulty maintaining the balance of the partitions since at
any particular time, it is difficult to say how many nodes are assigned to a
particular block. 
PT-Scotch \cite{ptscotch}, the parallel version of Scotch, is based on
recursive bipartitioning.

Within this work, we use sequential and distributed memory parallel algorithms of the open source multilevel graph partitioning framework KaHIP~\cite{kabapeE} (Karlsruhe High Quality Partitioning).
This framework tackles the node partitioning problem using the edge cut as objective.
ParHIP~\cite{parhip} is a distributed memory parallel node partitioning algorithm. The algorithm is based on parallelizing and adapting the \emph{label propagation}   technique originally developed for graph clustering~\cite{labelpropagationclustering}.  By introducing size   constraints, label propagation becomes applicable for both the coarsening and  the refinement phase of multilevel graph partitioning. 
   The resulting system is more scalable and achieves higher quality  than state-of-the-art systems like ParMETIS and PT-Scotch.

\subsection{Hypergraph Partitioning.}
HGP has evolved into a broad research area since the 1990s.
Well-known multilevel HGP software packages with certain distinguishing characteristics include PaToH~\cite{PaToH} (originating from scientific computing),
hMETIS~\cite{hMetisRB,hMetisKway} (originating from VLSI design), KaHyPar~\cite{ahss2017alenex,hs2017sea,KaHyPar-R} (general purpose, $n$-level), Mondriaan~\cite{Mondriaan} (sparse matrix
partitioning), MLPart~\cite{MLPart} (circuit partitioning), Zoltan~\cite{Zoltan}, and SHP~\cite{shp-vldb}  (distributed),
UMPa~\cite{DBLP:conf/dimacs/CatalyurekDKU12} (directed hypergraph model, multi-objective), and kPaToH (multiple constraints, fixed vertices)~\cite{Aykanat:2008}.

\subsection{Edge Partitioning.}
While hypergraph partitioning and the SPAC method are effective for computing an edge partitioning of small graphs, different techniques are used for graphs that do not fit in the memory of a single computer.
Gonzalez \etal~\cite{gonzalez-powergraph-2012} study the streaming case, where edges are assigned to partitions in a single pass over the graph. They investigate randomly assigning edges to partitions, as well as a greedy strategy. 
Using their greedy method, the average number of replicas is around $5$ for power-law graphs.
Bourse \etal~\cite{bourse-2014} later improved the replication factor by performing a similar process, but weighting each vertex by its degree. 
With Ja-Be-Ja-VC, Rahimian \etal~\cite{fatemeh2014jabejavc} present a distributed parallel algorithm for edge-based partitioning of large graphs which is essentially a local search algorithm
that iteratively improves upon an initial random assignment of edges to
partitions. 

\section{Engineering a Parallel Edge Partitioner}\label{sec:edge_partitioner}
We now present our algorithm to quickly compute high quality edge partitions in the distributed memory setting.
Roughly speaking, we engineer a distributed version of the SPAC algorithm (dSPAC) and then use a distributed memory parallel node-based graph partitioning to partition the model.
We start this section by giving a description of the data structures that we use and then explain the parallelization of the SPAC algorithm.

\subsection{Graph Data Structure.} We start with details of the parallel graph data structure and the implementation of the methods that handle communication. 
First of all, each PE  gets a subgraph, \ie a contiguous range of nodes $a..b$, of the whole graph as its input, such that the subgraphs combined correspond to the input graph.
Each subgraph consists of the nodes with IDs from the interval $I:=a..b$ and the edges incident to at least one node in this interval, and includes vertices not in $I$ that are adjacent to vertices in $I$. These vertices are referred to as \emph{ghost} nodes (also referred to elsewhere in the literature as \emph{halo} nodes). Note that each PE may have edges it shares with another PE and the number of edges assigned to the PEs may therefore vary significantly.
The subgraphs are stored using a standard adjacency array representation---we have one array to store edges and one array for nodes, which stores head pointers into the edge array. 
However, for the parallel setting, the node array is divided into two parts:
the first part stores local nodes and the second part stores ghost nodes. 
The method used to keep local node IDs and ghost node IDs consistent is explained next.

Instead of using the node IDs provided by the input graph (i.e., the \emph{global} IDs), each PE $p$ maps those IDs to the range $0\, .. \, n_p-1$, where $n_p$ is the number of distinct nodes of the subgraph on that PE. Note that this number includes the number of ghost nodes stored on PE $p$. The number of local nodes is denoted by $\ell_p$. 
Each global ID $i \in a \, .. \, b$ is mapped to a local node ID $i-a$. The IDs of the ghost nodes are mapped to the remaining $n_p - (b-a + 1)$ local IDs in the order in which they appeared during the construction of the graph data structure. 
Transforming a local node ID to a global ID or vice versa, can be done by adding or subtracting $a$. 
We store the global ID of the ghost nodes in an extra array and use a (local) hash table to transform global IDs of ghost nodes to their corresponding local IDs.
Additionally, we store for each ghost node the ID of its corresponding PE, using an array $\FuncSty{PE}$ for $\mathcal{O}(1)$ lookups. 
We call a node an \emph{interface node} if it is adjacent to at least one ghost node. The PE associated with the ghost node is called an \emph{adjacent PE}.

Similar to node IDs, directed edges are mapped to the range $0\, .. \, m_p-1$ of local edge IDs, where $m_p$ denotes the number of local directed edges of the subgraph on PE $p$. 
Edges that start from node $v \in V(G)$ have consecutive edge IDs $e_v \, .. \, e_{v + d(v) - 1}$.
A roughly equal distribution of nodes to PEs is suboptimal for the SPAC algorithm since the size of the split graph mostly depends on the number of edges. 
Hence, we improve load imbalance by distributing the graph such that the \emph{number of edges} is roughly the same~on~each~PE. 

\subsection{Distributed Split Graph Construction.}
We use $G$ to denote the input graph with $n_p$ nodes and $m_p$ edges on PE $p$ and construct the split graph $G'$ with $n'_p$ nodes and $m'_p$ edges on the same PE.
A pseudocode description of our distributed SPAC algorithm (dSPAC) is given in Algorithm~\ref{alg:dspac}.

First, recall that the split graph contains a set~$S_v$ of~$d(v)$ split nodes for each node $v \in V(G)$. 
For a node~$v$, we create the nodes of the split graph on the PE that owns $v$.
Thus, the number of local split graph nodes on PE $p$ is equal to the number of local edges of the input graph on PE $p$, \ie $n'_p = m_p$.  
Since the edges incident to the same node have consecutive IDs, we can use edge IDs from $G$ as local node IDs in the split graph $G'$.
To transform those local node IDs to global ones, we need the overall number of edges on PEs that have a smaller PE ID than $p$.
This can be easily computed in parallel by computing a prefix sum over the number of local edges $m_p$ of each PE $p$.
This takes $\Oh{\log p}$ time and linear work~\cite{kumar1994introduction}.
\begin{algorithm2e}[t!]
\caption{Our dSPAC algorithm on PE $p$}\label{alg:dspac}\normalsize
\KwIn{Graph $G$ with $n_p$ nodes, $m_p$ edges on PE $p$}
\KwIn{Empty split graph $G'$}
 $n'_p := m_p$ \tcp*[f]{one split node per local edge} \\
 $ m^\text{global}_p := \FuncSty{prefixSum}(m_p)$ \tcp*[f]{i.e., $\sum_{p'= 0}^{p-1} m_p$}\\
 \For(\tcp*[f]{create split nodes ... }) {$i:=0;~i < n'_p;~\Inc i$} {
   $G'.\FuncSty{insertNode}( m^\text{global}_p  + i)$\tcp*[f]{... with global IDs}
   }

   \textit{// compute $\mathcal{E}_p'$ values for all interface nodes}\\
   \ForEach(\tcp*[f]{i.e., $u \in 0\,..\, l_p - 1$}) {$u \in V(G)$} {
   $p' := -1$ \\
   \ForEach(\tcp*[f]{i.e., $ e \in e_u \, .. \, e_{u + d(u) - 1}$}) {$e \in E(u)$} {
     $v := \FuncSty{edgeTarget}(e)$ \\
     \If() {$\FuncSty{PE}[v] \neq p'$} {
        $p' := \FuncSty{PE}[v]$ \\
        $\mathcal{E}_{p'}(a + u) := m^\text{global}_p  + e$ \\
     }
   }
 }
 \textit{// exchange $\mathcal{E}_p'$ values with adjacent PEs}\\
   \ForEach() {PE $ p' \neq p$} {
     $\FuncSty{send}~\mathcal{E}_{p'}$ values to PE $p'$\\
     $\FuncSty{receive}~\mathcal{E}_p$ values from PE $p'$ \\
   }
   \textit{// insert auxiliary and dominant edges}\\
   \ForEach(\tcp*[f]{i.e., $u \in 0\,..\, l_p - 1$}) {$u \in V(G)$} {
     \ForEach(\tcp*[f]{i.e., $ e \in e_u \, .. \, e_{u + d(u) - 1}$}) {$e \in E(u)$} {
       $v := \FuncSty{edgeTarget}(e)$ \\
       $v' := \FuncSty{globalID}(v)$ \\
       $u' := e + m^\text{global}_p$ \tcp*[f]{global ID of split node $u'$} \\
       \textit{// insert dominant edge with edge-weight $\infty$}\\
       $G'.\FuncSty{insertEdge}(u', \mathcal{E}_p(v'), \infty)$\\
       $\mathcal{E}_p(v') := \mathcal{E}_p(v') + 1$ \\
       \If(\tcp*[f]{insert auxiliary edges}){$d(u) > 1$} {
         \textit{// compute target nodes for aux. edge}\\
       	$u_\text{prev} := e - e_u - 1 \mod d(u)$ \\
       	$u_\text{next} := e - e_u + 1 \mod d(u)$ \\
        \textit{// local$~\rightarrow~$global ID for target node}\\
        $v' := e_u + m^\text{global}_p + u_\text{next}$ \\
       	$G'.\FuncSty{insertEdge}(u',v',  1)$ \\
       	\If{$u_\text{prev} \neq u_\text{next}$}{
          \textit{// local$~\rightarrow~$global ID for target node}\\
          $v' := e_u + m^\text{global}_p + u_\text{prev}$\\
          $G'.\FuncSty{insertEdge}(u',v' , 1)$} 
	   }
     }
   }
   \KwOut{distributed split graph $G'$}
\end{algorithm2e}

Afterwards, the SPAC transformation requires us to connect split nodes by auxiliary edges and dominant edges.
Auxiliary edges with edge-weight one are inserted between split nodes $v \in S_v$ to connect them to an induced cycle.
Since $S_v$ is fully contained on a single PE, we can create the auxiliary edges independently~on~each~PE. 

Recall that dominant edges are induced by undirected edges $\{u, v\}$ in $G$. These edges connect split nodes, \ie nodes from the sets $S_u$ and $S_v$, such that each split node is incident to precisely one (undirected) dominant edge. To construct dominant edges, global coordination is required: say that $u$ has neighbors on two different PEs. 
Both neighbors do not know about each other, since $u$ is only available as ghost node on those PEs, \ie without information about its adjacency. Yet, both neighbors must choose a unique split node~from~the~set~$S_u$. 

\begin{figure}
    \centering
    \includegraphics{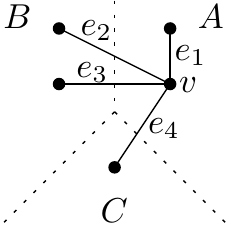}
    \caption{A node $v \in V(G)$ on PE $A$ with neighbors on PEs $A$, $B$ and $C$. Neighbors are traversed in the order of the edges. PEs are ordered $A < B < C$.}
    \label{fig:neighbors_on_pes}
\end{figure}
\begin{figure*}
    \centering
    \begin{subfigure}[c]{0.3\textwidth}
        \centering 
        \includegraphics{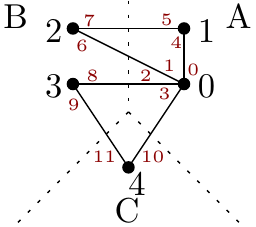}
        \caption{}
    \end{subfigure}
    \begin{subfigure}[c]{0.33\textwidth}
        \scriptsize
        \centering
        \begin{tabularx}{\columnwidth}{cXXXX}
            \toprule
            $\Rsh$ & A & B & C \\
            \midrule 
            A & & $\mathcal{E}_B(0)=1$ \newline $\mathcal{E}_B(1)=5$ & $\mathcal{E}_C(0)=3$ \\
            \midrule 
            B & $\mathcal{E}_A(2)=5$ \newline $\mathcal{E}_A(3)=7$ & & $\mathcal{E}_C(3)=8$ \\
            \midrule 
            C & $\mathcal{E}_A(4)=9$ & $\mathcal{E}_B(4)=10$ & \\
            \bottomrule
        \end{tabularx}
        \caption{}
    \end{subfigure}    
    \begin{subfigure}[c]{0.3\textwidth}
        \centering
        \includegraphics{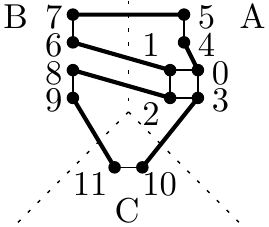}
        \caption{}
    \end{subfigure}

    \caption{(a) The input graph, distributed across three PEs (indicated by the dashed delimiters). IDs of directed edges are drawn next to their sources. (b) The $\mathcal{E}_p(\cdot)$ messages each PE sends to other PEs: the messages in the $i$-th row and $j$-th column are sent from PE $i$ to PE $j$. (c) The constructed split graph. Auxiliary edges are drawn thin, dominant edges are drawn thick.}
    \label{fig:splitgraph_construction_distributed}
\end{figure*}

Our algorithm solves this problem as follows. 
First, the adjacency lists of each node are ordered by their global node ID. 
We assume that this is already the case for the input network, otherwise one can simply run a sorting algorithm on the neighborhood of each vertex.
Since nodes are assigned consecutively among the processors, this implies that the adjacency list of each interface node in the input graph is ordered by the target processor,
 \ie the processor that owns the target of the edge.
Figure~\ref{fig:neighbors_on_pes} gives an~example.

For an interface node $u \in V(G)$ on PE $p$, let~$\mathcal{E}_{p'}(u)$  be the global ID of the first edge $\{u, v\}$ with~$v \in V(G)$ on PE $p'$.
In the split graph $G'$ this ID corresponds to the global ID of the first split node in $S_u$ that will be adjacent to a split node of $S_v$ on PE $p'$.
Due to the order of the vertices, these values can be easily computed for all adjacent PEs by scanning the neighborhood of that vertex.
We send the corresponding value to PE~$p'$. Using this information, we will be able to construct dominant edges in the desired way. 
To avoid startup overheads, we first compute all $\mathcal{E}_{p'}(v_I)$ values for all adjacent interface nodes $v_I$ and then send a \emph{single} message
from $p$ to $p'$ that contains all values.
Hence, the total message size for PE~$p$ is $\mathcal{O}(n^p_I P^p_A)$ where $n^p_I$ is the number of interface nodes on PE $p$ and~$P^p_A$ is the number~of~PEs adjacent to PE $p$.

We now create dominant edges by using the just computed values as follows. First of all, each processor traverses its nodes in the order of their IDs. 
Let $u \in V(G)$ be a node on PE $p$ with its ordered neighbors being $v_1, \dots, v_{d(v)}$. For each edge $\{u, v_i\}$, we create a dominant edge from $u$'s $i$-th split node to $\mathcal{E}_p(v_i)$, \ie we create the dominant edge $\{u'_i, \mathcal{E}_p(v_i)\}$.
Note that the value $\mathcal{E}$ was initially sent from the PE that contains~$v_i$. Afterwards, we increment $\mathcal{E}_p(v_i)$ by one, so that the next neighbor of $v_i$ that is on PE $p$ connects a dominant edge to $v_i$'s next split node. 

\begin{lemma} Our parallel SPAC algorithm creates a valid split graph.
\end{lemma}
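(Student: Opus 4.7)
The plan is to verify the three structural requirements a split graph must satisfy, as spelled out in Section~\ref{subs:spac}: (i) for every $v \in V(G)$ the set $S_v$ consists of exactly $d(v)$ split nodes which are connected into a cycle (or a path/edge/singleton in the degenerate small-$d(v)$ cases) by weight-$1$ auxiliary edges; (ii) every edge $\{u,v\}\in E(G)$ induces exactly one undirected dominant edge of weight $\infty$ between some $u'\in S_u$ and some $v'\in S_v$; and (iii) each split node is incident to exactly one dominant edge. I would prove these one by one, and the only step that requires real argument is~(iii) in combination with~(ii), since all cross-PE coordination is concentrated there.

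First, I would argue (i). Every PE $p$ inserts $n'_p = m_p$ split nodes, and assigns them the global IDs $m^\text{global}_p,\dots,m^\text{global}_p+m_p-1$, where $m^\text{global}_p$ is the prefix sum of local edge counts; these ranges tile $0..2|E|-1$ disjointly, so global IDs are unique. Because local edge IDs $e_u..e_{u+d(u)-1}$ are consecutive per source node, the $d(u)$ split nodes of each $u\in V(G)$ sit contiguously on the PE owning $u$. Auxiliary edges are then created purely locally: for every local $u$ with $d(u)>1$, the inner loop at line with $u_\text{prev},u_\text{next}$ inserts weight-$1$ edges to the cyclic predecessor and successor inside $S_u$, and the check $u_\text{prev}\neq u_\text{next}$ avoids the duplicate edge exactly when $d(u)=2$. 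Enumerating cases $d(u)=1,2,\geq 3$ matches the specification of the split-and-connect construction.

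The main obstacle is~(ii)+(iii): the algorithm creates every dominant edge without ever exchanging per-edge messages, so I must show the $\mathcal{E}_{p'}$ mechanism produces a globally consistent pairing of split nodes. The key invariant to prove by induction on processing order is the following. Let $u$ live on PE $p_u$ and let $v$ live on PE $p_v$, and suppose $\{u,v\}\in E(G)$ is the $\ell$-th edge (in sorted order of the source) that PE $p_u$ processes whose target lies on $p_v$. Then at the moment of processing, $\mathcal{E}_{p_u}(v)$ equals the global ID of the split node $v'_{k+\ell-1}\in S_v$, where $k$ is the position in $v$'s sorted adjacency list of the first neighbor residing on PE $p_u$. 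To establish the invariant, I would use three facts that the algorithm guarantees simultaneously: adjacency lists are sorted by global target ID, so the neighbors of $v$ on any single PE form a contiguous block in $v$'s list and the initial value sent from $p_v$ to $p_u$ is exactly the split node for the first element of that block; local nodes are iterated in (local, hence global) ID order on every PE, so the $\ell$-th edge $p_u$ processes with target $v$ comes precisely from the $\ell$-th of $v$'s neighbors on $p_u$ in sorted order; and the increment $\mathcal{E}_p(v')\Is \mathcal{E}_p(v')+1$ after each use advances in lockstep with $\ell$.

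Given this invariant, claim (ii) is immediate: PE $p_u$ inserts the directed dominant half $u'_i\to v'_{k+\ell-1}$, where $i$ is the local index of $v$ in $u$'s adjacency list, and by construction $k+\ell-1$ is precisely the position of $u$ in $v$'s sorted adjacency list; symmetrically, PE $p_v$ inserts the reverse half, so together they form exactly one undirected weight-$\infty$ dominant edge between the intended endpoints. The internal-edge case $p_u=p_v$ is the same argument specialized to a single PE. Claim (iii) then follows by a simple counting argument: each split node is the source of exactly one inserted dominant edge (because each edge of $G$ contributes one split node as source on each side), and the invariant guarantees that different source edges targeting $v$ hit pairwise distinct split nodes of $S_v$, so each split node is also the target of exactly one dominant edge. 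Combined with (i), this shows $G'$ meets every requirement of the SPAC construction.
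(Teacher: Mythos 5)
Your proof is correct and takes essentially the same route as the paper: the only nontrivial step --- that sorted adjacency lists, contiguous per-PE neighbor blocks, and the lockstep increments of the $\mathcal{E}$ counters force the two directed halves of each dominant edge to meet at the same pair of split nodes --- is exactly the paper's argument, just packaged as an explicit invariant, while your checks of split-node IDs, cycle structure, and one-dominant-edge-per-split-node are the parts the paper dispatches in a sentence. One wording slip only: in your invariant, $\ell$ must count the edges PE $p_u$ processes whose target is $v$ itself (the counter $\mathcal{E}_{p_u}(v)$ is per ghost node), not all edges whose target merely lies on PE $p_v$ --- which is how your subsequent explanation in fact reads it.
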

\begin{proof}
First note that, by the process above, we create precisely one dominant edge for each split node. 
Hence, it is sufficient to show that the resulting split graph is undirected. Consider a pair of adjacent nodes $u, v \in V(G)$ where~$u$ is owned by PE $p$ and $v$ is owned by PE $p'$. 
We show that both vertices pick~the~correct~split~node --- hence, forming an undirected (dominant) edge.

Roughly speaking, by the order in which the nodes and their incident edges are traversed it is ensured that the values $\mathcal{E}$ used for the creation of the edges point to the correct split node.
More precisely, let $u \in V(G)$ be a node of the input graph and its ordered neighbors be $v_1, \dots, v_{d(u)}$. We consider~$\{u, v_i\}$ and argue that its induced dominant edge is indeed undirected. 
For this purpose, let $u$ be owned by PE $p$ and $v_i$ be owned by PE $p'$.
On PE $p$, we create a directed edge $(u'_i, v')$, where $u'_i$ is $u$'s $i$-th split node and $v'$ is \emph{some} split node of $v$ defined by the process above.
We argue that PE $p'$, creates an edge $(v', u'_i)$ which makes the graph undirected. 
It is sufficient to argue that both edges include $u'_i$, since we can use the same argument with $u$ and $v_i$ reversed to imply that the other endpoint is correct too.
Node $u$ chooses $u'_i$ for the dominant edge from $S_u$ to $S_{v_i}$, because it traverses its neighbors in order and $v_i$ is its $i$-th neighbor.
In the other direction, $v_i$ chooses the split node of $u$ based on $\mathcal{E}_{p'}(u)$.
We claim that it chooses $u'_i$. 
Let $v_j$ be the first neighbor of $u$ on PE $p'$. 
Since the neighborhood is ordered as described above, $v_j, \dots, v_i$ are all on PE $p'$ and moreover, they are traversed in the same order on PE $p'$ (thus construct their dominant edges in the same order).
Thus, $v_i$ connects the dominant edge to $\mathcal{E}_{p'}(u) + (i - j)$, since that is the total increment of $\mathcal{E}_{p'}(u)$ at the time when $v_i$ constructs its dominant edges. 
But by the definition of $\mathcal{E}_{p'}(u)$, we have that $\mathcal{E}_{p'}(u)$ is the global split node ID of $u'_j$. Thus, $v_i$ connects to $u$'s $j + (i - j) = i$-th split~node. 

\end{proof}

Assuming that the adjacency list of the nodes are already sorted by global ID, our algorithm performs a linear amount of work. Thus split graph construction takes $\mathcal{O}(m/p + \log p)$ time, if edges are~distributed~evenly. 

After computing the split graph, we use the distributed parallel node-partitioning algorithms ParHIP~\cite{parhip} and ParMETIS~\cite{karypis1996parallel} to partition it.
To obtain an edge partition of the input graph, we transfer the block numbers of those endpoints to the edge in $G$ that~induced~the~dominant~edge.  

\section{Experimental Evaluation}
\label{sec:experiments}

In this section we evaluate the performance of the proposed algorithm. We start by presenting our methodology and setup, the system used for the evaluation and the benchmark set that we used.
We then look at solution quality, running time, and scalability of (d)SPAC-based GP as well as HGP, and compare our algorithm to those systems.

\subsection{Methodology and Setup.}

We implemented the distributed split graph construction algorithm described in Section~\ref{sec:edge_partitioner} in the ParHIP graph partitioning framework~\cite{parhip}.
In the following, we use SPAC when referring to sequential split graph construction and use dSPAC to denote our algorithm in the distributed setting.
The code is written in C++, compiled with g++ 7.3.0, and uses OpenMPI 1.10 as well as KaHIP v2.0. 

In order to establish the state-of-the-art regarding edge partitioning, we perform a large number of experiments using several partitioning tools including sequential and distributed graph
and hypergraph partitioners.
More precisely, our experimental comparisons use the KaHIP~\cite{kaffpa} and METIS~\cite{karypis1998fast} sequential graph partitioners
as well as their respective distributed versions ParHIP~\cite{parhip} and ParMETIS~\cite{karypis1996parallel}. Furthermore we
use the $k$-way (hMETIS-K) and the recursive bisection variant (hMETIS-R) of hMETIS 2.0 (p1)~\cite{hMetisRB,hMetisKway}, PaToH~\cite{PaToH}, and KaHyPar-MF~\cite{hss2018sea}. These hypergraph partitioners were chosen because they provide the best solution quality for sequential hypergraph partitioning~\cite{hss2018sea}. To evaluate distributed hypergraph partitioning approaches, we include Zoltan~\cite{Zoltan}. We also tried to use Par$k$way~\cite{Parkway2.0Impl}, but were not able work with the current version provided online\footnote{\url{https://github.com/parkway-partitioner/parkway}}, because the code has deadlocks and hangs on many instances.
Since there is no implementation of the Ja-Be-Ja-VC algorithm~\cite{fatemeh2014jabejavc} publicly available, we include our own implementation.
Judging from the results presented in \cite{fatemeh2014jabejavc}, both implementations provide comparable solution quality. However, 
since Ja-Be-Ja-VC performed significantly worse than all other partitioning approaches in our experiments, we only consider it in a sequential setting.
Furthermore we do not report running times, because all other systems are highly engineered, while our  Ja-Be-Ja-VC implementation is a prototype.
For partitioning, we use $\varepsilon = 0.03$ as imbalance factor for all tools except h-METIS-R, which treats the imbalance parameter differently.
We therefore use an adjusted imbalance value as described in \cite{shhmss2015kwayvianlevel}.

For each algorithm, we perform five repetitions with different
seeds and use the arithmetic mean to average solution quality and running time of the different runs. 
 When averaging over different instances, we use the  \emph{geometric mean} in order to give every instance a comparable influence on the final result.

We furthermore use \emph{performance plots}~\cite{KaHyPar-R} to compare the best solutions of competing algorithms on a per-instance basis.
For each algorithm, these plots relate the smallest vertex cut of all algorithms to the corresponding vertex cut produced by the algorithm on a per-instance basis.
A point close to one indicates that the partition produced by the corresponding algorithm was considerably worse than the
partition produced by the best algorithm. A value of zero therefore indicates that the corresponding algorithm produced the best solution.
Thus an algorithm is considered to outperform another algorithm if its corresponding ratio values are below those of the other algorithm.

\subsection{System and Instances.} 
We use the ForHLR II cluster (Forschungshochleistungsrechner) for our experimental evaluation.
The cluster has 1152 compute nodes, each of which is equipped with 64 GB main memory and two Intel Xeon E5-2660 Deca-Core v3 processors (Haswell) clocked at 2.6 GHz. 
A single Deca-Core processor has 25 MB L3-Cache, and every core has 256 KB L2-Cache and 64 KB L1-Cache.
All cluster nodes are connected by an InfiniBand 4X EDR interconnect.

We evaluate the algorithms on the graphs listed in Table~\ref{tab:instances}.
Random geometric rggX graphs have $2^X$ nodes and were generated using code from \cite{kappa}. Random hyperbolic rhgX graphs are generated using \cite{kagen} with power law exponent $2.2$ and average degree $8$. 
SPMV graphs are bipartite locality graphs for sparse matrix vector multiplication (SPMV), which were also used to evaluate the sequential SPAC algorithm in \cite{li2017spac}. Given a  $n \times n$ matrix $M$ (in our case the adjacency matrix of the corresponding graph), an SPMV graph corresponding to an SPMV computation $Mx = y$ consists of $2 n$ vertices representing the $x_i$ and $y_i$ vector entries and contains an edge $(x_i, y_j)$ if $x_i$ contributes to the computation of $y_j$, i.e. if $M_{ij} \neq 0$.

To evaluate the hypergraph approaches, we transform the graphs into hypergraphs. As described in Section~\ref{subs:spac}, a hypergraph instance contains one hypernode for each undirected edge in the graph and a hyperedge for each graph node that contains
the hypernodes corresponding to its incident edges.

\begin{table}
    \footnotesize
	\centering
	\begin{tabular}[4]{|l|r|r|c|c|}
		\hline
		Graph & $n$ & $m$ & Type & Ref. \\
		\hline
		\hline
		\multicolumn{5}{|c|}{Walshaw Graph Archive} \\
		\hline 
		\texttt{add20} & $\approx 2.3$K & $\approx 7.4$K & M & \cite{walshaw2000mpm} \\
		\texttt{data} & $\approx 2.8$K & $\approx 15$K & M & \cite{walshaw2000mpm} \\
		\texttt{3elt} & $\approx 4.7$K & $\approx 13.7$K & M & \cite{walshaw2000mpm} \\
		\texttt{uk} & $\approx 4.8$K & $\approx 6.8$K & M & \cite{walshaw2000mpm} \\
		\texttt{add32} & $\approx 4.9$K & $\approx 9.4$K & M & \cite{walshaw2000mpm} \\
		\texttt{bcsstk33} & $\approx 8.7$K & $\approx 291$K & M & \cite{walshaw2000mpm} \\
		\texttt{whitaker3} & $\approx 9.8$K & $\approx 289$K & M & \cite{walshaw2000mpm} \\
		\texttt{crack} & $\approx 10$K & $\approx 30$K & M & \cite{walshaw2000mpm} \\
		\texttt{wing\_nodal} & $\approx 10$K & $\approx 75$K & M & \cite{walshaw2000mpm} \\
		\texttt{fe\_4elt2} & $\approx 11$K & $\approx 32$K & M & \cite{walshaw2000mpm} \\
		\texttt{vibrobox} & $\approx 12$K & $\approx 165$K & M & \cite{walshaw2000mpm} \\
		\texttt{bcsstk29} & $\approx 13$K & $\approx 302$K & M & \cite{walshaw2000mpm} \\
		\texttt{4elt} & $\approx 15$K & $\approx 45$K & M & \cite{walshaw2000mpm} \\
		\texttt{fe\_sphere} & $\approx 16$K & $\approx 49$K & M & \cite{walshaw2000mpm} \\
		\texttt{cti} & $\approx 16$K & $\approx 48$K & M & \cite{walshaw2000mpm} \\
		\texttt{memplus} & $\approx 17$K & $\approx 54$K & M & \cite{walshaw2000mpm} \\
		\texttt{cs4} & $\approx 22$K & $\approx 43$K & M & \cite{walshaw2000mpm} \\
		\texttt{bcsstk30} & $\approx 28$K & $\approx 1$M & M & \cite{walshaw2000mpm} \\
		\texttt{bcsstk31} & $\approx 35$K & $\approx 572$K & M & \cite{walshaw2000mpm} \\
		\texttt{fe\_pwt} & $\approx 36$K & $\approx 144$K & M & \cite{walshaw2000mpm} \\
		\texttt{bcsstk32} & $\approx 44$K & $\approx 985$K & M & \cite{walshaw2000mpm} \\
		\texttt{fe\_body} & $\approx 45$K & $\approx 163$K & M & \cite{walshaw2000mpm} \\
		\texttt{t60k} & $\approx 60$K & $\approx 89$K & M & \cite{walshaw2000mpm} \\
		\texttt{wing} & $\approx 62$K & $\approx 121$K & M & \cite{walshaw2000mpm} \\
		\texttt{brack2} & $\approx 62$K & $\approx 366$K & M & \cite{walshaw2000mpm} \\
		\texttt{finan512} & $\approx 74$K & $\approx 261$K & M & \cite{walshaw2000mpm} \\
		\texttt{fe\_tooth} & $\approx 78$K & $\approx 452$K & M & \cite{walshaw2000mpm} \\
		\texttt{fe\_rotor} & $\approx 99$K & $\approx 662$K & M & \cite{walshaw2000mpm} \\
		\texttt{598a} & $\approx 110$K & $\approx 741$K & M & \cite{walshaw2000mpm} \\
		\texttt{fe\_ocean} & $\approx 143$K & $\approx 409$K & M & \cite{walshaw2000mpm} \\
		\texttt{144} & $\approx 144$K & $\approx 1$M & M & \cite{walshaw2000mpm} \\
		\texttt{wave} & $\approx 156$K & $\approx 1$M & M & \cite{walshaw2000mpm} \\
		\texttt{m14b} & $\approx 214$K & $\approx 1.6$M & M & \cite{walshaw2000mpm} \\
		\texttt{auto} & $\approx 448$K & $\approx 3.3$M & M & \cite{walshaw2000mpm} \\
		\hline
		rhgX & $2^{10}$ -- $2^{18}$ & $\approx 3.6K$ -- $976$K & S & \cite{kagen} \\
		\hline 
		\hline 
		\multicolumn{5}{|c|}{SPMV Graphs} \\
		\hline
		\texttt{cant\_spmv} & $\approx 125$K & $\approx 2$M & SP & \cite{williams2009178} \\
		\texttt{scircuit\_spmv} & $\approx 350$K & $\approx 100$K & SP & \cite{UFsparsematrixcollection} \\
        \texttt{mc2depi\_spmv} & $\approx 1$M & $\approx 2.1$M & SP & \cite{williams2009178} \\		
		\texttt{in-2004\_spmv} & $\approx 2.5$M & $\approx 17$M & SP & \cite{webgraphWS} \\
		\texttt{circuit5M\_spmv} & $\approx 11$M & $\approx 60$M & SP & \cite{UFsparsematrixcollection} \\ 
		\hline 
		\hline 
		\multicolumn{5}{|c|}{Large Graphs} \\
		\hline
		\texttt{amazon} & $\approx 407$K & $\approx 2.3$M & S & \cite{snap} \\
		\texttt{eu-2005} & $\approx 862$K & $\approx 16.1$M & S & \cite{benchmarksfornetworksanalysis} \\
		\texttt{youtube} & $\approx 1.1$M & $\approx 2.9$M & S & \cite{snap} \\
		\texttt{in-2004} & $\approx 1.4$M & $\approx 27$M & S & \cite{benchmarksfornetworksanalysis} \\
		\texttt{packing} & $\approx 2.1$M & $\approx 17.4$M & M & \cite{benchmarksfornetworksanalysis} \\
		\texttt{channel} & $\approx 4.8$M & $\approx 42.6$M & M & \cite{benchmarksfornetworksanalysis} \\
		\texttt{road\_central} & $\approx 14$M & $\approx 34$M & R & \cite{benchmarksfornetworksanalysis} \\
		\texttt{hugebubble-10} & $\approx 18.3$M & $\approx 27.5$M & M & \cite{benchmarksfornetworksanalysis} \\
		\texttt{uk-2002} & $\approx 18.5$M & $\approx 262$M & S & \cite{webgraphWS} \\
		\texttt{nlpkkt240} & $\approx 27.9$M & $\approx 373$M & M & \cite{UFsparsematrixcollection} \\
		\texttt{europe\_osm} & $\approx 51$M & $\approx 108$M & R & \cite{benchmarksfornetworksanalysis} \\
          \hline
          rhgX & $2^{20}$ -- $2^{26}$ & $\approx 4$M -- $280$M & S & \cite{kagen} \\
          \hline
                            \hline
		\multicolumn{5}{|c|}{Huge Graphs} \\
		\hline
		rggX & $2^{25}$ -- $2^{28}$ & $\approx 550$M -- $5$G & M & \cite{kappa} \\
		\hline
	\end{tabular}
	\caption{Our benchmark set. Type 'S' stands for social or web graphs, 'M' is used for mesh type networks, 'R' is used for road networks, SP(MV) is used for graphs for sparse matrix-vector multiplication.}
	\label{tab:instances}
\end{table}

\subsection{Solution Quality of SPAC+X and HGP.}
\label{sec:comparison}
We start by exploring the solution quality provided by the different \emph{sequential} algorithmic approaches to the edge partitioning problem, i.e., we consider the vertex cut that
is obtained by applying the partition of the SPAC or hypergraph model to the input graph.

We restrict the benchmark set to the Walshaw graphs, SPMV graphs with up to $1$M nodes\footnote{\texttt{scircuit\_spmv}, \texttt{cant\_spmv} and \texttt{mc2depi\_spmv}} and \texttt{rhg10} -- \texttt{rhg18},
since the running times for hypergraph partitioners were too high for larger instances. We run all partitioners on \emph{one} PE, i.e., one core of a single node.
Each instance is partitioned into $k$ blocks for $k \in \{2,4,8,16,32,64,128\}$.

In the experiments of Li \etal~\cite{li2017spac}, the SPAC approach combined with METIS as graph partitioner was significantly faster than
the hypergraph partitioners hMetis and PaToH, while achieving comparable solution quality.
Since this comparison was restricted to five graphs, we first compare a larger number of high quality graph and hypergraph partitioners on a larger benchmark set.

\begin{figure}[h!]
    \centering
    \includegraphics[width=\columnwidth]{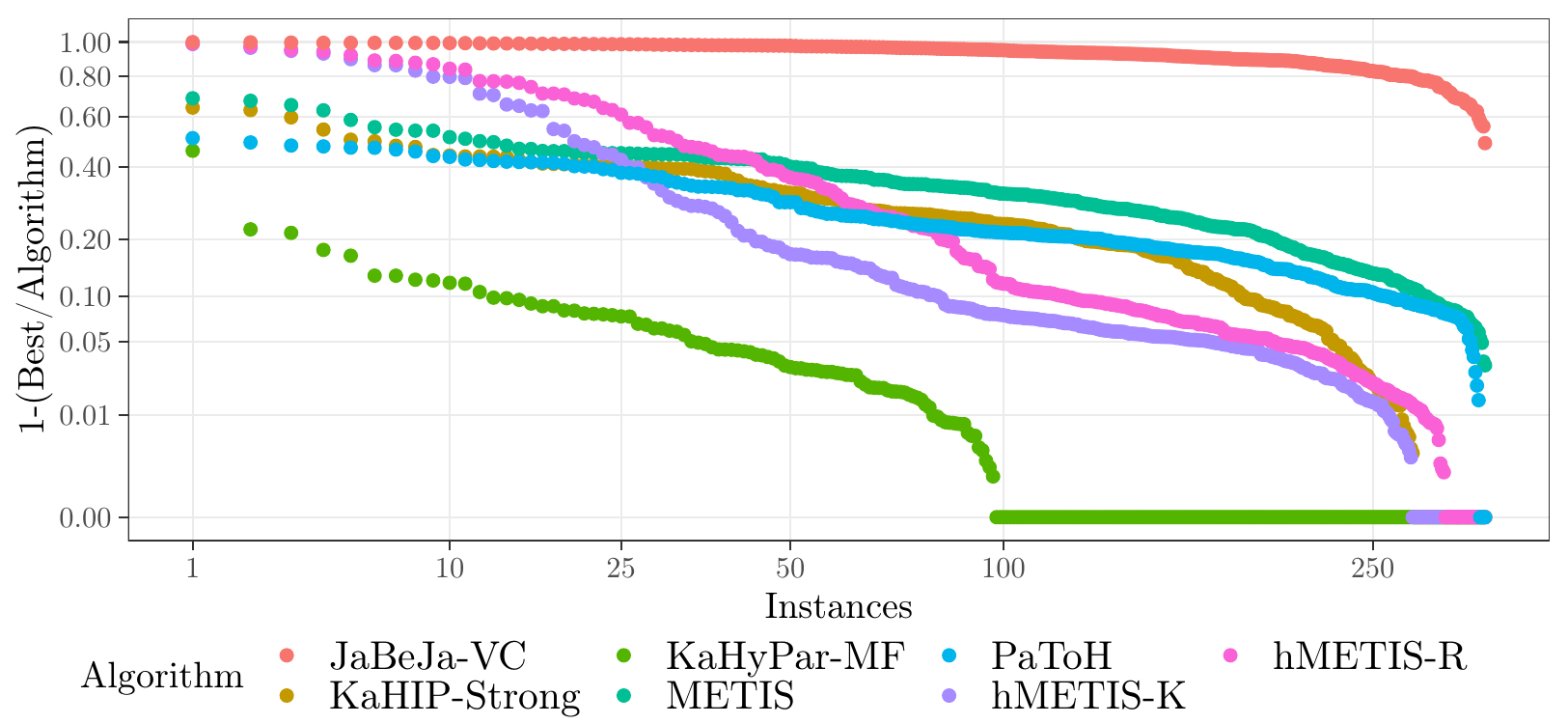}
    \caption{Performance plot comparing SPAC+X, several HGP systems, and Ja-Be-Ja-VC on small graphs.}
    \label{fig:plot_li}
\end{figure}
\begin{table}[h!]
    \begin{tabularx}{\columnwidth}{lXrrXrr}
        \toprule
         Type && \multicolumn{2}{c}{\emph{S} (rhgX)} && \multicolumn{2}{c}{\emph{M} \& \emph{SP}} \\
        Algorithm&& \multicolumn{1}{c}{VC} & \multicolumn{1}{c}{Time} && \multicolumn{1}{c}{VC} & \multicolumn{1}{c}{Time} \\
        \midrule
        KaHyPar-MF && \numprint{433} & \numprint{510.61} s && \numprint{1350} & \numprint{20.75} s \\
        	hMETIS-R && \numprint{625} & \numprint{17.19} s && \numprint{1684} & \numprint{66.64} s \\
        hMETIS-K && \numprint{524} & \numprint{14.81} s && \numprint{1587} & \numprint{43.44} s \\
        PaToH && \numprint{508} & \numprint{0.46} s && \numprint{1679} & \numprint{0.51} s \\
        	Zoltan && \numprint{943} & \numprint{0.66} s && \numprint{1962} & \numprint{1.31} s \\
        \midrule
        KaHIP-Strong && \numprint{517} & \numprint{84.33} s && \numprint{1638} & \numprint{191.36} s \\
        METIS && \numprint{595} & \numprint{0.63} s && \numprint{1829} & \numprint{1.92} s \\
        \midrule
        	Ja-Be-Ja-VC && \numprint{6336} & -- && \numprint{18441} & --  \\
        \bottomrule
    \end{tabularx}

    \caption{Mean vertex cuts and running times for \texttt{rhg10} -- \texttt{rhg18} (left) and Walshaw graphs and SPMV graphs with up to $1$M nodes (right).}
    \label{tab:li}
\end{table}

As can be seen in Figure~\ref{fig:plot_li} and Table~\ref{tab:li}, partitioning the hypergraph model with KaHyPar-MF overall results in the lowest vertex cuts. Moreover we see that all
hypergraph partitioners except Zoltan on average perform better than SPAC+METIS, with PaToH even being faster.
This is true not only for meshes and SPMV graphs, but also for rhgX graphs with power law degree distribution.
This effect could be explained by the choice of $k$---the number of blocks used for partitioning. While we use standard values for (node-based) graph
partitioning benchmarks~\cite{walshaw2000mpm}, Li~\etal~\cite{li2017spac} choose $k$ such that each block contains \emph{approximately} \numprint{10240} edges.
Thus some instances are partitioned into up to \numprint{1692} and \numprint{5952} blocks, which might be too large for current partitioning tools.

Looking at the solution quality of different SPAC+X approaches, we see that KaHIP performs better than METIS when using its \emph{strong}
configuration and even outperforms all hypergraph partitioners except KaHyPar-MF. However on Walshaw and SPMV graphs, it is also the slowest partitioning approach.

As can be seen in the additional performance plots in Appendix~\ref{app:perf_plots}, KaHyPar-MF dominates all other hypergraph partitioners in terms of solution quality.
Compared to hMETIS and PaToH, its solutions are \numprint{18}\%  and \numprint{19}\% better on average.
In a sequential edge partitioning setting with a reasonable number $k$ of blocks, we therefore conclude that hypergraph partitioning performs better than SPAC+X regarding both
solution quality (using KaHyPar-MF) and running time (using PaToH).
Finally we note that the distributed edge partitioner Ja-Be-Ja-VC can not compete with high quality graph or hypergraph partitioning systems. Since its solutions are more than an order of magnitude worse, we do not consider it in the following comparisons.

\subsection{Solution Quality of dSPAC+X and dHGP.}
\label{sec:distributed_split_graph_partitioning} 

We now investigate state-of-the-art methods for computing edge partitions in the distributed memory setting. Here, we use the large graphs from Table~\ref{tab:instances} including \texttt{rhg20} -- \texttt{rhg26}, as well
as the large two SPMV graphs \texttt{in-2004\_spmv} and \texttt{circuit5M\_smpv}.
Since Ja-Be-Ja-VC~\cite{fatemeh2014jabejavc} already produced low quality solutions on small graphs, we restrict the following comparison to distributed memory hypergraph partitioning with Zoltan
and distributed graph partitioning using our distributed split graph construction (dSPAC) in combination with both ParMETIS and ParHIP. All instances are
again partitioned into $k\in\{2,4,8,16,32,64,128\}$ blocks. This time, we run all algorithms on 32 cluster nodes (i.e., with 640 PEs in total).  

As can be seen in Table~\ref{tbl:performance_all_large} and Figure~\ref{fig:performance_all_large}, dSPAC-based graph partitioning outperforms
the hypergraph partitioning approach using Zoltan in \emph{both} solution quality and running time. While dSPAC+ParMETIS is the fastest configuration, dSPAC+ParHIP-Eco
provides the best solution quality. On irregular social networks and web graphs (type \emph{S}), the solutions of ParHIP-Fast and ParMETIS are on average \numprint{15}\% and \numprint{18}\% worse, respectively,
while the vertex cuts produced by Zoltan are worse by more than a factor of two. Furthermore dSPAC+X also performs better than HGP with Zoltan for meshes, road networks and SPMV graphs.
Since dSPAC+X outperforms HGP with Zoltan in a distributed setting regarding both solution quality \emph{and} running time, we thus conclude that is is currently the best
approach for computing edge partitions of large graphs, in particular if the graphs do not fit into the memory of a single machine.

\begin{table}
    \begin{tabularx}{\columnwidth}{lXrrXrr}
        \toprule
        Type && \multicolumn{2}{c}{\emph{S}} && \multicolumn{2}{c}{ \emph{M} \& \emph{R} \& \emph{SP}} \\
        Algorithm&& \multicolumn{1}{c}{VC} & \multicolumn{1}{c}{Time} && \multicolumn{1}{c}{VC} & \multicolumn{1}{c}{Time} \\
        \midrule
        ParHIP-Fast && \numprint{22321} & \numprint{17.92} s && \numprint{8380} & \numprint{9.54} s \\
        ParHIP-Eco &&  \numprint{18952} & \numprint{65.91} s && \numprint{7255} & \numprint{37.30} s \\
        ParMETIS &&  \numprint{23221} & \numprint{3.01} s && \numprint{9432} & \numprint{1.66} s \\
        \midrule
        Zoltan && \numprint{50780} & \numprint{51.82} s && \numprint{13516} & \numprint{51.80} s \\
        \bottomrule
    \end{tabularx}
    \caption{Mean vertex cuts and running times for large social graphs (left), as well as meshes, road networks, and SPMV graphs (right).}\label{tbl:performance_all_large}
\end{table}

\begin{figure}
	\includegraphics[width=\columnwidth]{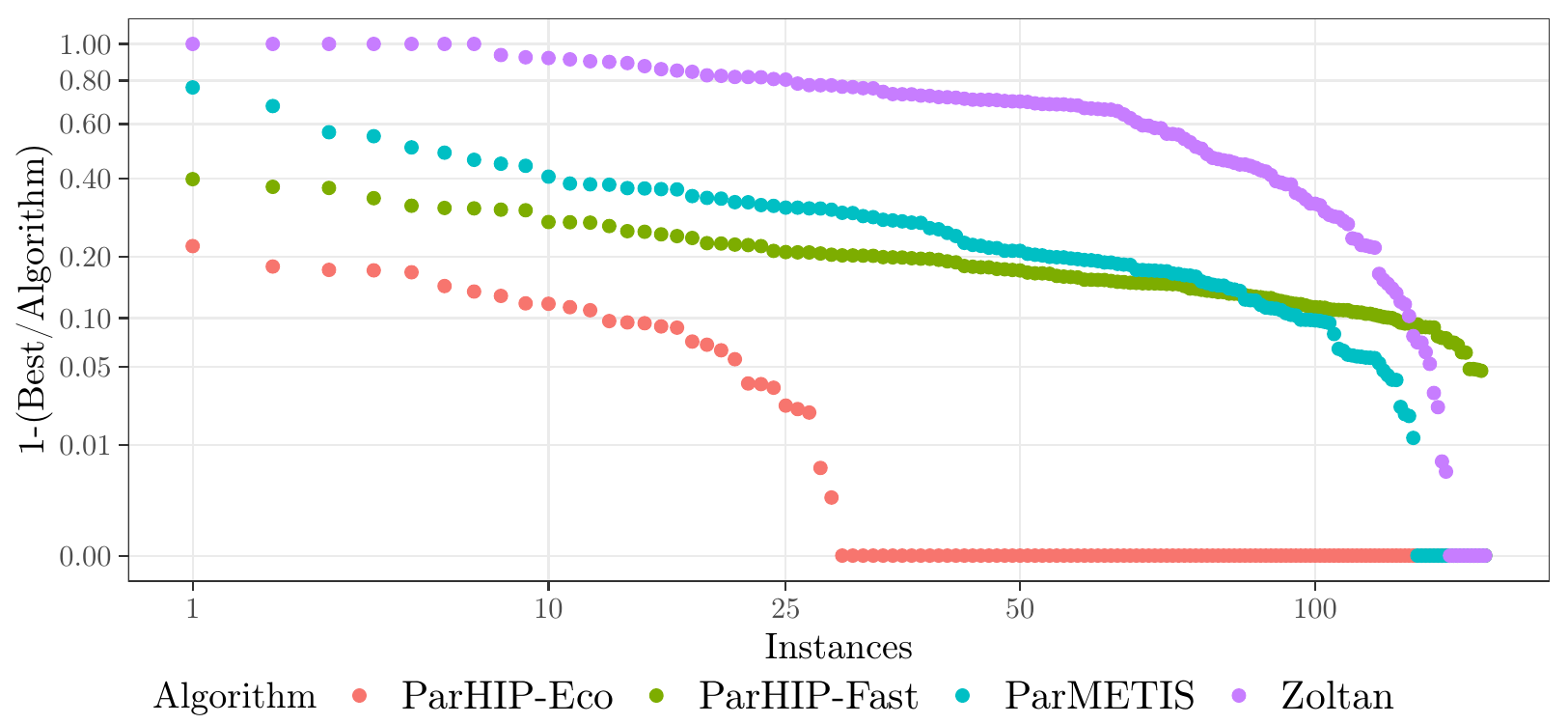}
	\caption{Solution quality for distributed approaches on large social graphs, meshes, road networks, and SPMV graphs. Instances\protect\footnotemark~where Zoltan required too much memory were set to $1$.}
	\label{fig:performance_all_large}	 
\end{figure}

\subsection{Scalability and Solution Quality of dSPAC+X.}
Finally, we look at the scaling behavior of distributed SPAC graph construction and partitioning using ParMetis and ParHIP-Fast. To simplify the
evaluation, we restrict the experiments in this section to partitioning the eight largest graphs (including \texttt{rgg25} - \texttt{rgg28}) into $k=2$ blocks on an increasing number of PEs.
We start with a single PE on a single node and then go up to all \numprint{20} PEs of a single node.
From there on we double the number of nodes in each step, until we arrive at \numprint{128} nodes with a total of \numprint{2560} PEs.
The results  are shown in Figure~\ref{fig:scaling}. Results for the remaining large graphs can be found in Figure~\ref{app:fig:scaling} in Appendix~\ref{app:dspac+x}.
The running times of dSPAC+X are dominated by the running times of the distributed graph partitioners.
While dSPAC+ParMetis is faster than dSPAC+ParHIP-Fast, the latter scales slightly better than the former.
Regarding solution quality, Table~\ref{app:cuts_large} in Appendix~\ref{app:dspac+x} shows that for large numbers of PEs, ParHIP-Fast computes better solutions
than ParMETIS.
By combining our distributed split graph construction algorithm with high quality distributed graph partitioning algorithm, we are now 
able to compute edge partitions of huge graphs that were previously not solvable on one a single PE, or even a small number of PEs.

\footnotetext{\texttt{nlpkkt240} with $k\in\{2,4,8,16,32,64,128\}$.}

\begin{figure}[t]
      \centering
              \begin{subfigure}[c]{\columnwidth}
        \includegraphics[width=\columnwidth]{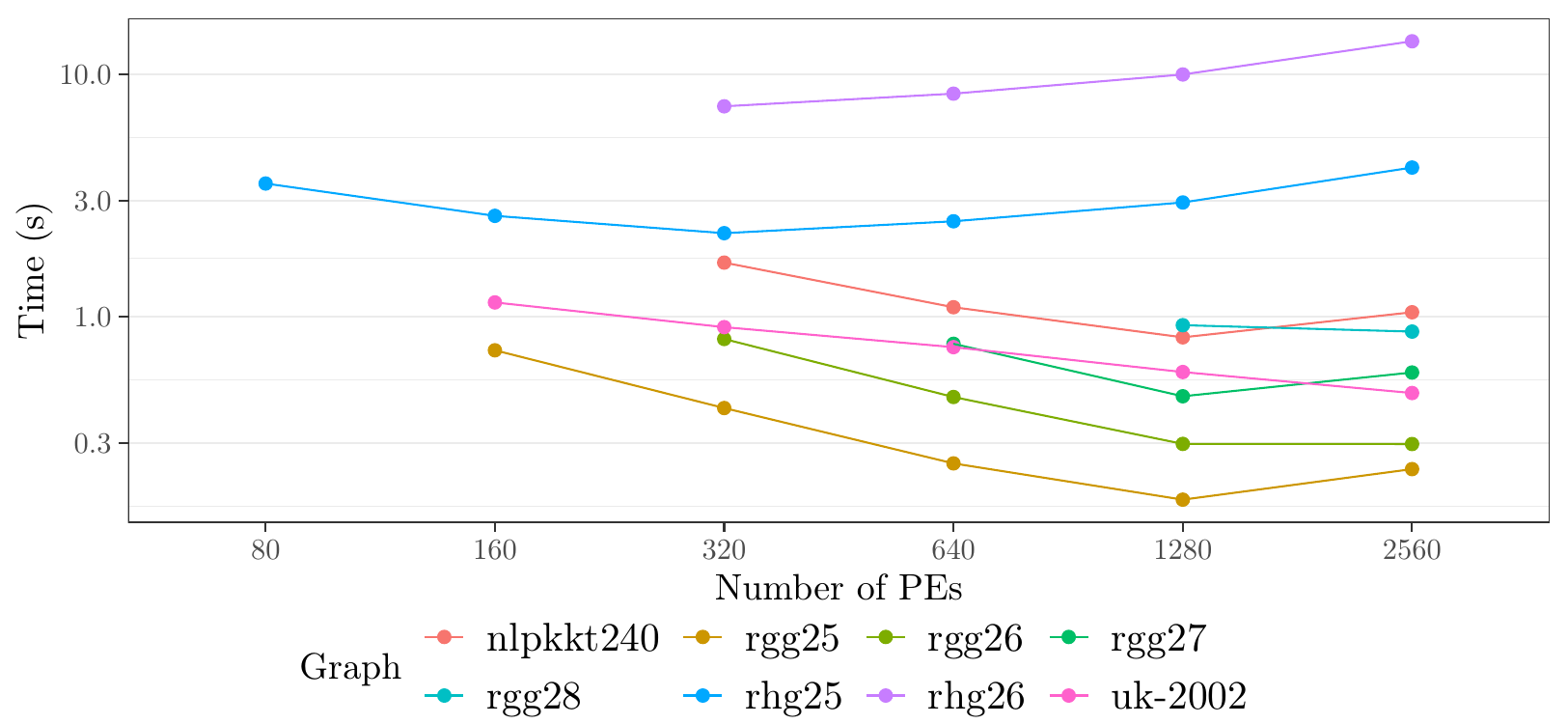}
        \caption{Running time for distributed split graph construction.}
        \label{fig:scale_split_large}
            \end{subfigure}
      \begin{subfigure}[c]{\columnwidth}
        \includegraphics[width=\columnwidth]{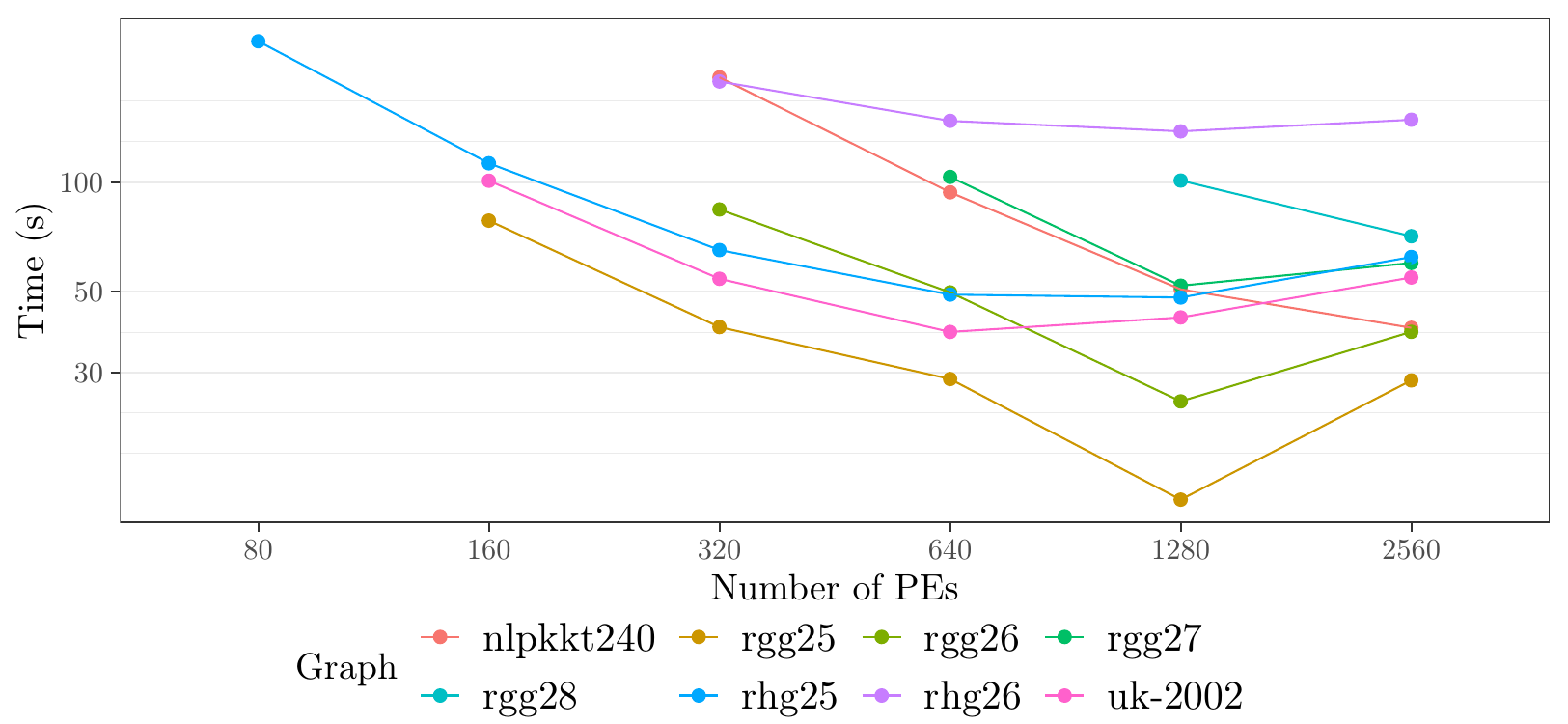}
        \caption{Running time for dSPAC+ParHIP-Fast.}
        \label{fig:scale_parhip_large}
    \end{subfigure}
          \begin{subfigure}[c]{\columnwidth}
        \includegraphics[width=\columnwidth]{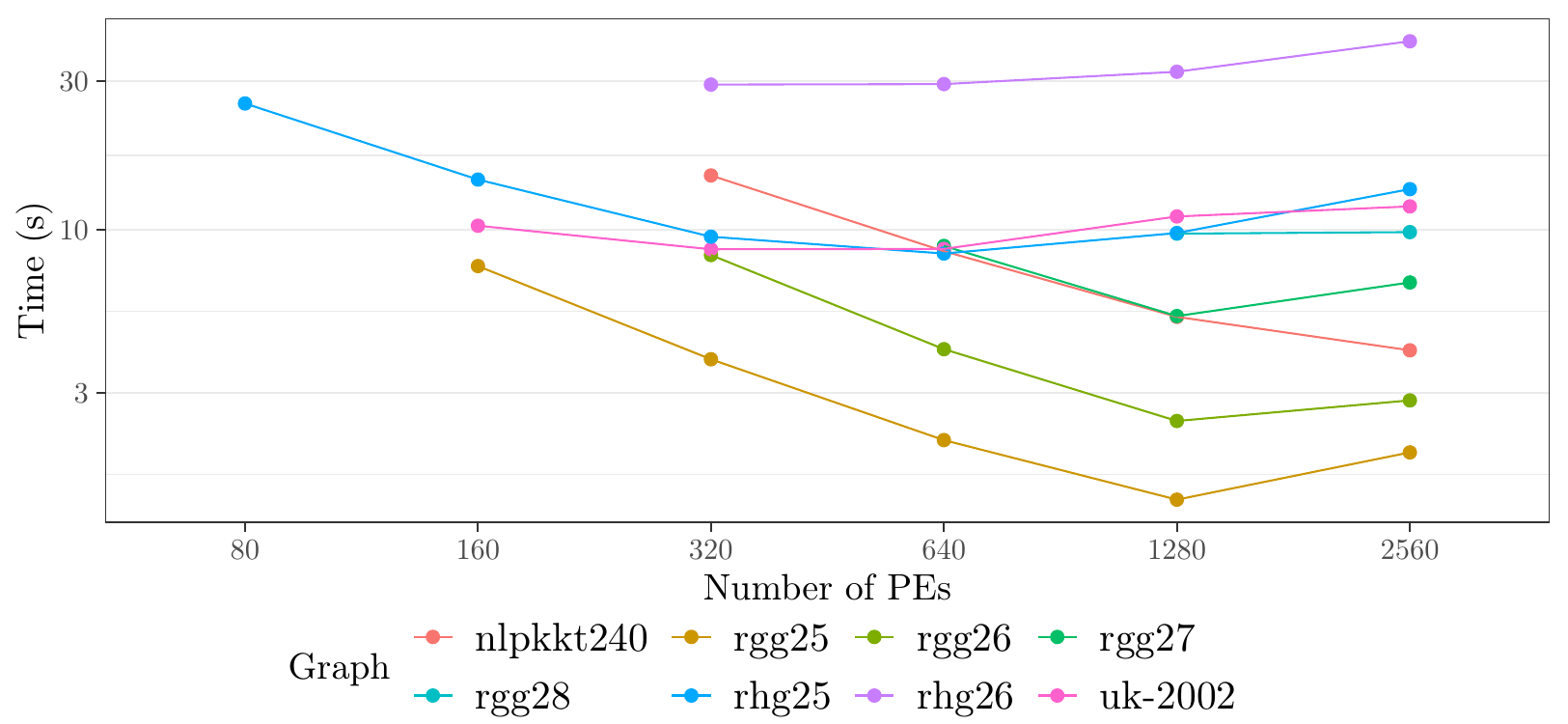}
        \caption{Running time for dSPAC+ParMETIS.}
        \label{fig:scale_parhip_large}
      \end{subfigure}
      \caption{Comparing the running times of distributed split graph construction (a), dSPAC+ParHIP-Fast (b), and dSPAC+ParMETIS (c) for the eight largest graphs of our benchmark set on
      an increasing number of PEs.}
      \label{fig:scaling}
\end{figure} 
\section{Conclusion and Future Work}
\label{sec:conclusion}
We presented an efficient distributed memory parallel edge partitioning algorithm that computes solutions of very high quality.
By efficiently parallelizing the split graph construction, our dSPAC+X algorithm scales to graphs with billions of edges and runs efficiently on up to 2560 PEs.
Our extensive experiments furthermore show that in a \emph{sequential} setting hypergraph partitioners still outperform node-based graph partitioning based on the SPAC approach
regarding both solution quality and running time. Hence, we believe that fast high-quality hypergraph partitioners yet have to be developed in order to narrow the gap between dSPAC+X and dHGP.

In the future, we would like to run a working implementation of Par$k$way in order to get a complete overview regarding the state-of-the-art in distributed HGP. 
Furthermore it would be interesting to combine ParHIP with the the shared memory parallel MT-KaHIP~\cite{2017arXiv171008231A} partitioner in order to get
a partitioner that uses shared memory parallelism within a cluster node, while cluster nodes themselves still work in a distributed
memory fashion.
Lastly, we plan to release our algorithm.

\subsection*{Acknowledgments.}
This work was performed on the computational resource ForHLR II funded by the Ministry of Science, Research and the Arts Baden-W\"urttemberg and DFG ("Deutsche Forschungsgemeinschaft").

\renewcommand\bibsection{\section*{\refname}}
\bibliographystyle{abbrvnat}
\bibliography{phdthesiscs}

\begin{thebibliography}{51}
\providecommand{\natexlab}[1]{#1}
\providecommand{\url}[1]{\texttt{#1}}
\expandafter\ifx\csname urlstyle\endcsname\relax
  \providecommand{\doi}[1]{doi: #1}\else
  \providecommand{\doi}{doi: \begingroup \urlstyle{rm}\Url}\fi

\bibitem[Akhremtsev et~al.(2017)Akhremtsev, Heuer, Sanders, and
  Schlag]{ahss2017alenex}
Y.~Akhremtsev, T.~Heuer, P.~Sanders, and S.~Schlag.
\newblock Engineering a direct \emph{k}-way hypergraph partitioning algorithm.
\newblock In \emph{19th Workshop on Algorithm Engineering and Experiments,
  (ALENEX)}, pages 28--42, 2017.

\bibitem[{Akhremtsev} et~al.(2017){Akhremtsev}, {Sanders}, and
  {Schulz}]{2017arXiv171008231A}
Y.~{Akhremtsev}, P.~{Sanders}, and C.~{Schulz}.
\newblock {High-Quality Shared-Memory Graph Partitioning}.
\newblock \emph{ArXiv e-prints}, Oct. 2017.

\bibitem[Alpert et~al.(1998)Alpert, Huang, and Kahng]{MLPart}
C.~J. Alpert, J.-H. Huang, and A.~B. Kahng.
\newblock {Multilevel Circuit Partitioning}.
\newblock \emph{IEEE Transactions on Computer-Aided Design of Integrated
  Circuits and Systems}, 17\penalty0 (8):\penalty0 655--667, 1998.

\bibitem[Aykanat et~al.(2008)Aykanat, Cambazoglu, and U\c{c}ar]{Aykanat:2008}
C.~Aykanat, B.~B. Cambazoglu, and B.~U\c{c}ar.
\newblock {Multi-level Direct K-way Hypergraph Partitioning with Multiple
  Constraints and Fixed Vertices}.
\newblock \emph{Journal of Parallel and Distributed Computing}, 68\penalty0
  (5):\penalty0 609--625, 2008.
\newblock ISSN 0743-7315.

\bibitem[Bader et~al.(2014)Bader, Kappes, Meyerhenke, Sanders, Schulz, and
  Wagner]{benchmarksfornetworksanalysis}
D.~Bader, A.~Kappes, H.~Meyerhenke, P.~Sanders, C.~Schulz, and D.~Wagner.
\newblock {Benchmarking for Graph Clustering and Partitioning}.
\newblock In \emph{Encyclopedia of Social Network Analysis and Mining}.
  Springer, 2014.

\bibitem[Bichot and Siarry(2011)]{GPOverviewBook}
C.~Bichot and P.~Siarry, editors.
\newblock \emph{Graph Partitioning}.
\newblock Wiley, 2011.

\bibitem[Bourse et~al.(2014)Bourse, Lelarge, and Vojnovic]{bourse-2014}
F.~Bourse, M.~Lelarge, and M.~Vojnovic.
\newblock {Balanced Graph Edge Partition}.
\newblock In \emph{Proc. 20th ACM SIGKDD International Conf. on Knowledge
  Discovery and Data Mining}, KDD '14, pages 1456--1465. ACM, 2014.

\bibitem[Bulu\c{c} et~al.(2016)Bulu\c{c}, Meyerhenke, Safro, Sanders, and
  Schulz]{SPPGPOverviewPaper}
A.~Bulu\c{c}, H.~Meyerhenke, I.~Safro, P.~Sanders, and C.~Schulz.
\newblock {Recent Advances in Graph Partitioning}.
\newblock In \emph{Algorithm Engineering - Selected Results and Surveys}, pages
  117--158. Springer, 2016.

\bibitem[Cataly{\"u}rek and Aykanat(1999{\natexlab{a}})]{PaToH}
{\"{U}}.~V. Cataly{\"u}rek and C.~Aykanat.
\newblock {Hypergraph-Partitioning-Based Decomposition for Parallel
  Sparse-Matrix Vector Multiplication}.
\newblock \emph{IEEE Transactions on Parallel and Distributed Systems},
  10\penalty0 (7):\penalty0 673--693, Jul 1999{\natexlab{a}}.
\newblock ISSN 1045-9219.

\bibitem[Cataly{\"u}rek and Aykanat(1999{\natexlab{b}})]{PaToHManual}
{\"{U}}.~V. Cataly{\"u}rek and C.~Aykanat.
\newblock {{PaToH}: Partitioning Tool for Hypergraphs}.
\newblock \url{http://bmi.osu.edu/umit/PaToH/manual.pdf}, 1999{\natexlab{b}}.

\bibitem[Chevalier and Pellegrini(2008)]{ptscotch}
C.~Chevalier and F.~Pellegrini.
\newblock {PT-Scotch}.
\newblock \emph{Parallel Computing}, 34\penalty0 (6-8):\penalty0 318--331,
  2008.

\bibitem[Davis()]{UFsparsematrixcollection}
T.~Davis.
\newblock {The University of Florida Sparse Matrix Collection,
  \url{http://www.cise.ufl.edu/research/sparse/matrices}, 2008}.

\bibitem[Deveci et~al.(2015)Deveci, Kaya, U{\c{c}}ar, and
  {\c{C}}ataly{\"{u}}rek]{UMPa}
M.~Deveci, K.~Kaya, B.~U{\c{c}}ar, and {\"{U}}.~V. {\c{C}}ataly{\"{u}}rek.
\newblock {Hypergraph partitioning for multiple communication cost metrics:
  Model and methods}.
\newblock \emph{Journal of Parallel and Distributed Computing}, 77:\penalty0
  69--83, 2015.

\bibitem[Devine et~al.(2006)Devine, Boman, Heaphy, Bisseling, and
  Cataly{\"u}rek]{Zoltan}
K.~D. Devine, E.~G. Boman, R.~T. Heaphy, R.~H. Bisseling, and {\"{U}}.~V.
  Cataly{\"u}rek.
\newblock {Parallel Hypergraph Partitioning for Scientific Computing}.
\newblock In \emph{20th International Conference on Parallel and Distributed
  Processing (IPDPS)}, pages 124--124. IEEE, 2006.

\bibitem[Donath(1988)]{donath1988logic}
W.~Donath.
\newblock Logic partitioning.
\newblock \emph{Physical Design Automation of VLSI Systems}, pages 65--86,
  1988.

\bibitem[Funke et~al.(2018)Funke, Lamm, Sanders, Schulz, Strash, and von
  Looz]{kagen}
D.~Funke, S.~Lamm, P.~Sanders, C.~Schulz, D.~Strash, and M.~von Looz.
\newblock Communication-free massively distributed graph generation.
\newblock In \emph{{IEEE} International Parallel and Distributed Processing
  Symposium, {IPDPS}}, 2018.

\bibitem[Gonzalez et~al.(2012)Gonzalez, Low, Gu, Bickson, and
  Guestrin]{gonzalez-powergraph-2012}
J.~E. Gonzalez, Y.~Low, H.~Gu, D.~Bickson, and C.~Guestrin.
\newblock {{P}ower{G}raph: {D}istributed Graph-Parallel Computation on Natural
  Graphs}.
\newblock In \emph{Presented as part of the 10th {USENIX} Symposium on
  Operating Systems Design and Implementation ({OSDI} 12)}, pages 17--30.
  {USENIX}, 2012.
\newblock ISBN 978-1-931971-96-6.

\bibitem[Hendrickson and Kolda(2000)]{hendrickson2000graph}
B.~Hendrickson and T.~G. Kolda.
\newblock {Graph Partitioning Models for Parallel Computing}.
\newblock \emph{Parallel Computing}, 26\penalty0 (12):\penalty0 1519--1534,
  2000.

\bibitem[Heuer and Schlag(2017)]{hs2017sea}
T.~Heuer and S.~Schlag.
\newblock {Improving Coarsening Schemes for Hypergraph Partitioning by
  Exploiting Community Structure}.
\newblock In \emph{{16th International Symposium on Experimental Algorithms,
  (SEA)}}, pages 21:1--21:19, 2017.

\bibitem[Heuer et~al.(2018)Heuer, Sanders, and Schlag]{hss2018sea}
T.~Heuer, P.~Sanders, and S.~Schlag.
\newblock {Network Flow-Based Refinement for Multilevel Hypergraph
  Partitioning}.
\newblock In \emph{17th International Symposium on Experimental Algorithms
  (SEA)}, volume 103, pages 1:1--1:19, 2018.

\bibitem[Holtgrewe et~al.(2010)Holtgrewe, Sanders, and Schulz]{kappa}
M.~Holtgrewe, P.~Sanders, and C.~Schulz.
\newblock {Engineering a Scalable High Quality Graph Partitioner}.
\newblock \emph{Proceedings of the 24th IEEE International Parallal and
  Distributed Processing Symposium}, pages 1--12, 2010.

\bibitem[Kabiljo et~al.(2017)Kabiljo, Karrer, Pundir, Pupyrev, Shalita, Presta,
  and Akhremtsev]{shp-vldb}
I.~Kabiljo, B.~Karrer, M.~Pundir, S.~Pupyrev, A.~Shalita, A.~Presta, and
  Y.~Akhremtsev.
\newblock {Social Hash Partitioner}: A scalable distributed hypergraph
  partitioner.
\newblock \emph{Proceedings VLDB Endow.}, 10\penalty0 (11):\penalty0
  1418--1429, August 2017.

\bibitem[Karypis and Kumar(1996)]{karypis1996parallel}
G.~Karypis and V.~Kumar.
\newblock {Parallel Multilevel $k$-way Partitioning Scheme for Irregular
  Graphs}.
\newblock In \emph{Proceedings of the ACM/IEEE Conference on
  Supercomputing'96}, 1996.

\bibitem[Karypis and Kumar(1998{\natexlab{a}})]{hMetisManual}
G.~Karypis and V.~Kumar.
\newblock hmetis: A hypergraph partitioning package, version 1.5. 3.
\newblock \emph{user manual}, 23, 1998{\natexlab{a}}.

\bibitem[Karypis and Kumar(1998{\natexlab{b}})]{karypis1998fast}
G.~Karypis and V.~Kumar.
\newblock {A Fast and High Quality Multilevel Scheme for Partitioning Irregular
  Graphs}.
\newblock \emph{SIAM Journal on Scientific Computing}, 20\penalty0
  (1):\penalty0 359--392, 1998{\natexlab{b}}.

\bibitem[Karypis and Kumar(1999)]{hMetisKway}
G.~Karypis and V.~Kumar.
\newblock {Multilevel $K$-way Hypergraph Partitioning}.
\newblock In \emph{Proceedings of the 36th ACM/IEEE Design Automation
  Conference}, pages 343--348. ACM, 1999.
\newblock ISBN 1-58113-109-7.

\bibitem[Karypis et~al.(1999)Karypis, Aggarwal, Kumar, and Shekhar]{hMetisRB}
G.~Karypis, R.~Aggarwal, V.~Kumar, and S.~Shekhar.
\newblock {Multilevel Hypergraph Partitioning: Applications in VLSI Domain}.
\newblock \emph{{IEEE Transactions on Very Large Scale Integration VLSI
  Systems}}, 7\penalty0 (1):\penalty0 69--79, 1999.
\newblock ISSN 1063-8210.

\bibitem[Kumar et~al.(1994)Kumar, Grama, Gupta, and
  Karypis]{kumar1994introduction}
V.~Kumar, A.~Grama, A.~Gupta, and G.~Karypis.
\newblock \emph{Introduction to parallel computing: design and analysis of
  algorithms}, volume 400.
\newblock Benjamin/Cummings Redwood City, 1994.

\bibitem[Laboratory~of Web~Algorithms()]{webgraphWS}
U.~o.~M. Laboratory~of Web~Algorithms.
\newblock Datasets, \url{http://law.dsi.unimi.it/datasets.php}.

\bibitem[Lengauer(1990)]{Lengauer:1990}
T.~Lengauer.
\newblock \emph{{Combinatorial Algorithms for Integrated Circuit Layout}}.
\newblock John Wiley \& Sons, Inc., 1990.
\newblock ISBN 0-471-92838-0.

\bibitem[Lescovec()]{snap}
J.~Lescovec.
\newblock Stanford {N}etwork {A}nalysis {P}ackage ({S}{N}{A}{P}).
\newblock \url{http://snap.stanford.edu/index.html}.

\bibitem[Li et~al.(2017)Li, Geda, Hayes, Chen, Chaudhari, Zhang, and
  Szegedy]{li2017spac}
L.~Li, R.~Geda, A.~B. Hayes, Y.~Chen, P.~Chaudhari, E.~Z. Zhang, and
  M.~Szegedy.
\newblock A simple yet effective balanced edge partition model for parallel
  computing.
\newblock \emph{SIGMETRICS Perform. Eval. Rev.}, 45\penalty0 (1):\penalty0
  6--6, June 2017.

\bibitem[Low et~al.(2010)Low, Gonzalez, Kyrola, Bickson, Guestrin, and
  Hellerstein]{graphlab-paper}
Y.~Low, J.~Gonzalez, A.~Kyrola, D.~Bickson, C.~Guestrin, and J.~M. Hellerstein.
\newblock {G}raph{L}ab: A new parallel framework for machine learning.
\newblock In P.~Gr\"unwald and P.~Spirtes, editors, \emph{Proc. 26th Conf. on
  Uncertainty in Artificial Intelligence (UAI)}, pages 340--349. AUAI Press,
  2010.

\bibitem[Malewicz et~al.(2010)Malewicz, Austern, Bik, Dehnert, Horn, Leiser,
  and Czajkowski]{pregel-paper}
G.~Malewicz, M.~H. Austern, A.~J. Bik, J.~C. Dehnert, I.~Horn, N.~Leiser, and
  G.~Czajkowski.
\newblock {Pregel: A System for Large-scale Graph Processing}.
\newblock In \emph{Proc. 2010 ACM SIGMOD International Conf. on Management of
  Data}, SIGMOD '10, pages 135--146. ACM, 2010.

\bibitem[McCune et~al.(2015)McCune, Weninger, and Madey]{mccune-tlav-2015}
R.~R. McCune, T.~Weninger, and G.~Madey.
\newblock {Thinking Like a Vertex: A Survey of Vertex-Centric Frameworks for
  Large-Scale Distributed Graph Processing}.
\newblock \emph{ACM Comput. Surv.}, 48\penalty0 (2):\penalty0 25:1--25:39, Oct.
  2015.

\bibitem[Meyerhenke et~al.(2015)Meyerhenke, Sanders, and Schulz]{parhip}
H.~Meyerhenke, P.~Sanders, and C.~Schulz.
\newblock Parallel graph partitioning for complex networks.
\newblock In \emph{2015 IEEE International Parallel and Distributed Processing
  Symposium}, pages 1055--1064, May 2015.

\bibitem[Papa and Markov(2007)]{Papa2007}
D.~A. Papa and I.~L. Markov.
\newblock {Hypergraph Partitioning and Clustering}.
\newblock In T.~F. Gonzalez, editor, \emph{{Handbook of Approximation
  Algorithms and Metaheuristics.}} Chapman and Hall/CRC, 2007.

\bibitem[Raghavan et~al.(2007)Raghavan, Albert, and
  Kumara]{labelpropagationclustering}
U.~N. Raghavan, R.~Albert, and S.~Kumara.
\newblock {Near Linear Time Algorithm to Detect Community Structures in
  Large-Scale Networks}.
\newblock \emph{Physical Review E}, 76\penalty0 (3), 2007.

\bibitem[Rahimian et~al.(2014)Rahimian, Payberah, Girdzijauskas, and
  Haridi]{fatemeh2014jabejavc}
F.~Rahimian, A.~H. Payberah, S.~Girdzijauskas, and S.~Haridi.
\newblock Distributed vertex-cut partitioning.
\newblock In K.~Magoutis and P.~Pietzuch, editors, \emph{Distributed
  Applications and Interoperable Systems}, pages 186--200, Berlin, Heidelberg,
  2014.

\bibitem[Sanders and Schulz(2011)]{kaffpa}
P.~Sanders and C.~Schulz.
\newblock {Engineering Multilevel Graph Partitioning Algorithms}.
\newblock In \emph{Proceedings of the 19th European Symposium on Algorithms},
  volume 6942 of \emph{LNCS}, pages 469--480. Springer, 2011.
\newblock ISBN 978-3-642-23718-8.

\bibitem[Sanders and Schulz(2013)]{kabapeE}
P.~Sanders and C.~Schulz.
\newblock {Think Locally, Act Globally: Highly Balanced Graph Partitioning}.
\newblock In \emph{Proceedings of the 12th International Symposium on
  Experimental Algorithms (SEA'12)}, LNCS. Springer, 2013.

\bibitem[Schlag et~al.(2016{\natexlab{a}})Schlag, Henne, Heuer, Meyerhenke,
  Sanders, and Schulz]{KaHyPar-R}
S.~Schlag, V.~Henne, T.~Heuer, H.~Meyerhenke, P.~Sanders, and C.~Schulz.
\newblock {$k$-way Hypergraph Partitioning via $n$-Level Recursive Bisection}.
\newblock In \emph{18th Workshop on Algorithm Engineering and Experiments
  (ALENEX)}, pages 53--67, 2016{\natexlab{a}}.

\bibitem[Schlag et~al.(2016{\natexlab{b}})Schlag, Henne, Heuer, Meyerhenke,
  Sanders, and Schulz]{shhmss2015kwayvianlevel}
S.~Schlag, V.~Henne, T.~Heuer, H.~Meyerhenke, P.~Sanders, and C.~Schulz.
\newblock {$k$-way Hypergraph Partitioning via $n$-Level Recursive Bisection}.
\newblock In \emph{18th Workshop on Algorithm Engineering and Experiments
  (ALENEX)}, pages 53--67, 2016{\natexlab{b}}.

\bibitem[Schloegel et~al.(2003)Schloegel, Karypis, and Kumar]{schloegel2000gph}
K.~Schloegel, G.~Karypis, and V.~Kumar.
\newblock {Graph Partitioning for High Performance Scientific Simulations}.
\newblock In \emph{The Sourcebook of Parallel Computing}, pages 491--541, 2003.

\bibitem[Schulz and Strash(2018, to appear)]{gpApplicationPaper}
C.~Schulz and D.~Strash.
\newblock {Graph Partitioning -- Formulations and Applications to Big Data.}
\newblock In \emph{Encyclopedia on Big Data Technologies}, 2018, to appear.

\bibitem[Trifunovic and Knottenbelt(2004)]{Parkway2.0Impl}
A.~Trifunovic and W.~J. Knottenbelt.
\newblock {Parkway 2.0: A Parallel Multilevel Hypergraph Partitioning Tool}.
\newblock In \emph{Computer and Information Sciences - ISCIS 2004}, volume
  3280, pages 789--800. Springer, 2004.
\newblock ISBN 978-3-540-23526-2.

\bibitem[{{\"{U}}. V. {\c{C}}ataly{\"{u}}rek and M. Deveci and K. Kaya and B.
  U{\c{c}}ar}(2012)]{DBLP:conf/dimacs/CatalyurekDKU12}
{{\"{U}}. V. {\c{C}}ataly{\"{u}}rek and M. Deveci and K. Kaya and B.
  U{\c{c}}ar}.
\newblock {UMPa: {A} multi-objective, multi-level partitioner for communication
  minimization}.
\newblock In D.~A. Bader, H.~Meyerhenke, P.~Sanders, and D.~Wagner, editors,
  \emph{Graph Partitioning and Graph Clustering - 10th {DIMACS} Implementation
  Challenge Workshop, Georgia Institute of Technology, Atlanta, GA, USA,
  February 13-14, 2012. Proceedings}, volume 588 of \emph{Contemporary
  Mathematics}, pages 53--66. AMS, 2012.

\bibitem[Vastenhouw and Bisseling(2005)]{Mondriaan}
B.~Vastenhouw and R.~H. Bisseling.
\newblock {A Two-Dimensional Data Distribution Method for Parallel Sparse
  Matrix-Vector Multiplication}.
\newblock \emph{SIAM Review}, 47\penalty0 (1):\penalty0 67--95, 2005.
\newblock ISSN 0036-1445.

\bibitem[Walshaw and Cross(2000)]{walshaw2000mpm}
C.~Walshaw and M.~Cross.
\newblock {Mesh Partitioning: A Multilevel Balancing and Refinement Algorithm}.
\newblock \emph{SIAM Journal on Scientific Computing}, 22\penalty0
  (1):\penalty0 63--80, 2000.

\bibitem[Walshaw and Cross(2007)]{Walshaw07}
C.~Walshaw and M.~Cross.
\newblock {JOSTLE: Parallel Multilevel Graph-Partitioning Software -- An
  Overview}.
\newblock In \emph{{Mesh Partitioning Techniques and Domain Decomposition
  Techniques}}, pages 27--58. Civil-Comp Ltd., 2007.

\bibitem[Williams et~al.(2009)Williams, Oliker, Vuduc, Shalf, Yelick, and
  Demmel]{williams2009178}
S.~Williams, L.~Oliker, R.~Vuduc, J.~Shalf, K.~Yelick, and J.~Demmel.
\newblock Optimization of sparse matrix-vector multiplication on emerging
  multicore platforms.
\newblock \emph{Parallel Computing}, 35\penalty0 (3):\penalty0 178 -- 194,
  2009.
\newblock Revolutionary Technologies for Acceleration of Emerging Petascale
  Applications.

\end{thebibliography}
\clearpage
\begin{appendix}

\onecolumn
\section{Full Experimental Data}
\label{appendix:data}
\begin{table}[h] 
	\centering
	\begin{subfigure}[t]{0.49\columnwidth}
    \begin{tabularx}{\columnwidth}{lXrXr}
        \toprule
        Algorithm && VC (avg) && Runtime \\
        \midrule
        KaHyPar-MF && \numprint{1283} && \numprint{17.30} s \\
        hMETIS-K && \numprint{1518} && \numprint{35.06} s \\
        hMETIS-R && \numprint{1617} && \numprint{51.85} s \\
        PaToH && \numprint{1604} && \numprint{0.42} s \\
        Zoltan (1) && \numprint{1875} && \numprint{1.06} s \\
        Zoltan (20) && \numprint{1967} && \numprint{0.25} s \\
        \midrule
        KaHIP-Fast && \numprint{1869} && \numprint{1.66} s \\
        KaHIP-Eco && \numprint{1663} && \numprint{7.76} s \\
        KaHIP-Strong && \numprint{1572} && \numprint{160.02} s \\
        METIS && \numprint{1750} && \numprint{1.55} s \\
        \midrule
        ParHIP-Fast (20) && \numprint{1947} && \numprint{0.66} s \\
        ParHIP-Eco (20) && \numprint{1737} && \numprint{108.26} s \\
        ParMETIS (20) && \numprint{1970} && \numprint{0.08} s \\
        \midrule
        Ja-Be-Ja-VC && \numprint{15077} && -- \\
        \bottomrule
    \end{tabularx}
    \caption{Walshaw graphs.}
    \label{tab:app_full_walshaw}
    \vspace{0.5em}
    \end{subfigure}
	\begin{subfigure}[t]{0.49\columnwidth}
    \begin{tabularx}{\columnwidth}{lXrXr}
        \toprule
        Algorithm && VC (avg) && Runtime \\
        \midrule
        KaHyPar-MF && \numprint{1283} && \numprint{162.84} s \\
        hMETIS-K && \numprint{2620} && \numprint{493.26} s \\
        hMETIS-R && \numprint{2664} && \numprint{1144.71} s \\
        PaToH && \numprint{2813} && \numprint{4.69} s \\
        Zoltan (1) && \numprint{3289} && \numprint{14.39} s \\
        Zoltan (20) && \numprint{3874} && \numprint{3.40} s \\
        \midrule
        KaHIP-Fast && \numprint{3120} && \numprint{11.21} s \\
        KaHIP-Eco && \numprint{2754} && \numprint{52.53} s \\
        KaHIP-Strong && \numprint{2598} && \numprint{1428.63} s \\
        METIS && \numprint{3014} && \numprint{20.60} s \\
        \midrule
        ParHIP-Fast (20) && \numprint{4039} && \numprint{12.77} s \\
        ParHIP-Eco (20) && \numprint{3289} && \numprint{146.06} s \\
        ParMETIS (20) && \numprint{3403} && \numprint{0.63} s \\
        \midrule
        Ja-Be-Ja-VC && \numprint{180762} && -- \\
        \bottomrule
    \end{tabularx}
    \caption{SPMV graphs with up to $1$M edges.}
    \label{tab:app_full_spmv}
	\end{subfigure}
	\begin{subfigure}[t]{0.49\columnwidth}
    \begin{tabularx}{\columnwidth}{lXrXr}
        \toprule
        Algorithm && VC (avg) && Runtime \\
        \midrule
        KaHyPar-MF && \numprint{433} && \numprint{510.75} s \\
        hMETIS-K && \numprint{524} && \numprint{14.81} s \\
        hMETIS-R && \numprint{625} && \numprint{17.19} s \\
        PaToH && \numprint{508} && \numprint{0.46} s \\
        Zoltan (1) && \numprint{1301} && \numprint{0.36} s \\
        	Zoltan (20) && \numprint{1481} && \numprint{0.19} s \\
        \midrule
        KaHIP-Fast && \numprint{658} && \numprint{1.04} s \\
        KaHIP-Eco && \numprint{580} && \numprint{4.50} s \\
        KaHIP-Strong && \numprint{517} && \numprint{84.33} s \\
        METIS && \numprint{595} && \numprint{0.63} s \\
        \midrule
        ParHIP-Fast (20) && \numprint{613} && \numprint{0.52} s \\
        ParHIP-Eco (20) && \numprint{538} && \numprint{106.20} s \\
        ParMETIS (20) && \numprint{676} && \numprint{0.06} s \\
        \midrule
        Ja-Be-Ja-VC && \numprint{6336} && -- \\
        \bottomrule
    \end{tabularx}
    \caption{Small random hyperbolic graphs, \texttt{rhg10} -- \texttt{rhg18}.}
    \label{tab:app_full_rhg}
    \end{subfigure}
	\begin{subfigure}[t]{0.49\columnwidth}
    \begin{tabularx}{\columnwidth}{lXrXr}
        \toprule
        Algorithm && VC (avg) && Runtime \\
        \midrule
        KaHyPar-MF && \numprint{1091} && \numprint{43.58} s \\
        hMETIS-K && \numprint{1321} && \numprint{37.99} s \\
        hMETIS-R && \numprint{1443} && \numprint{55.00} s \\
        PaToH && \numprint{1342} && \numprint{0.54} s \\
        Zoltan (1) && \numprint{1846} && \numprint{1.06} s \\
        Zoltan (20) && \numprint{1979} && \numprint{0.30} s \\
        \midrule
        KaHIP-Fast && \numprint{1593} && \numprint{1.81} s \\
        KaHIP-Eco && \numprint{1412} && \numprint{8.43} s \\
        KaHIP-Strong && \numprint{1317} && \numprint{173.79} s \\
        METIS && \numprint{1480} && \numprint{1.64} s \\
        \midrule
        ParHIP-Fast (20) && \numprint{1651} && \numprint{0.80} s \\
        ParHIP-Eco (20) && \numprint{1459} && \numprint{110.39} s \\
        ParMETIS (20) && \numprint{1679} && \numprint{0.08} s \\
        \midrule
        Ja-Be-Ja-VC && \numprint{15774} && -- \\
        \bottomrule
    \end{tabularx}
    \caption{All small graphs.}
    \label{tab:app_full_small}
    \end{subfigure}
    \caption{Full comparison of Walshaw, SPMV graphs with up to $1$M nodes and \texttt{rhg10} -- \texttt{rhg18}. The results for Zoltan are reported for one PE, Zoltan (1), and 20 PEs, Zoltan (20). For ParHIP and ParMETIS, only results on 20 PEs are shown. All algorithms were executed on a single node.}
\end{table}

\clearpage

\section{Additional Performance Plots}\label{app:perf_plots}
The following performance plots are based on the Walshaw graphs,
          SPMV graphs with up to 1M nodes and the random hyperbolic graphs \texttt{rhg10} -- \texttt{rhg18}.
          
\begin{figure}[h]
	\begin{subfigure}[t]{0.5\columnwidth}
        \includegraphics[width=\columnwidth]{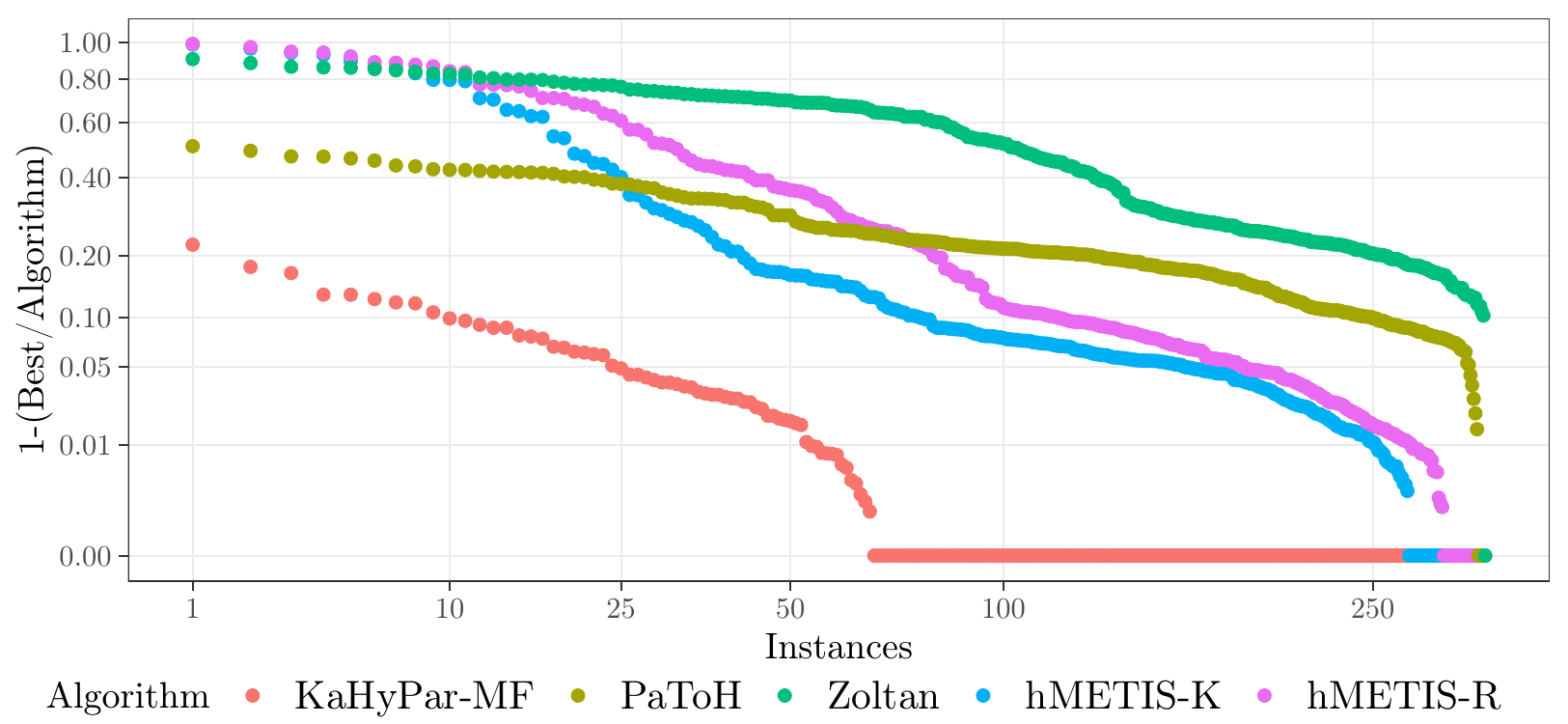}
        \caption{Performance comparison of all hypergraph partitioners.}
        \label{fig:performance_walshaw_spmv_rhg_hgr}
	\end{subfigure}
	\begin{subfigure}[t]{0.5\columnwidth}
		\includegraphics[width=\columnwidth]{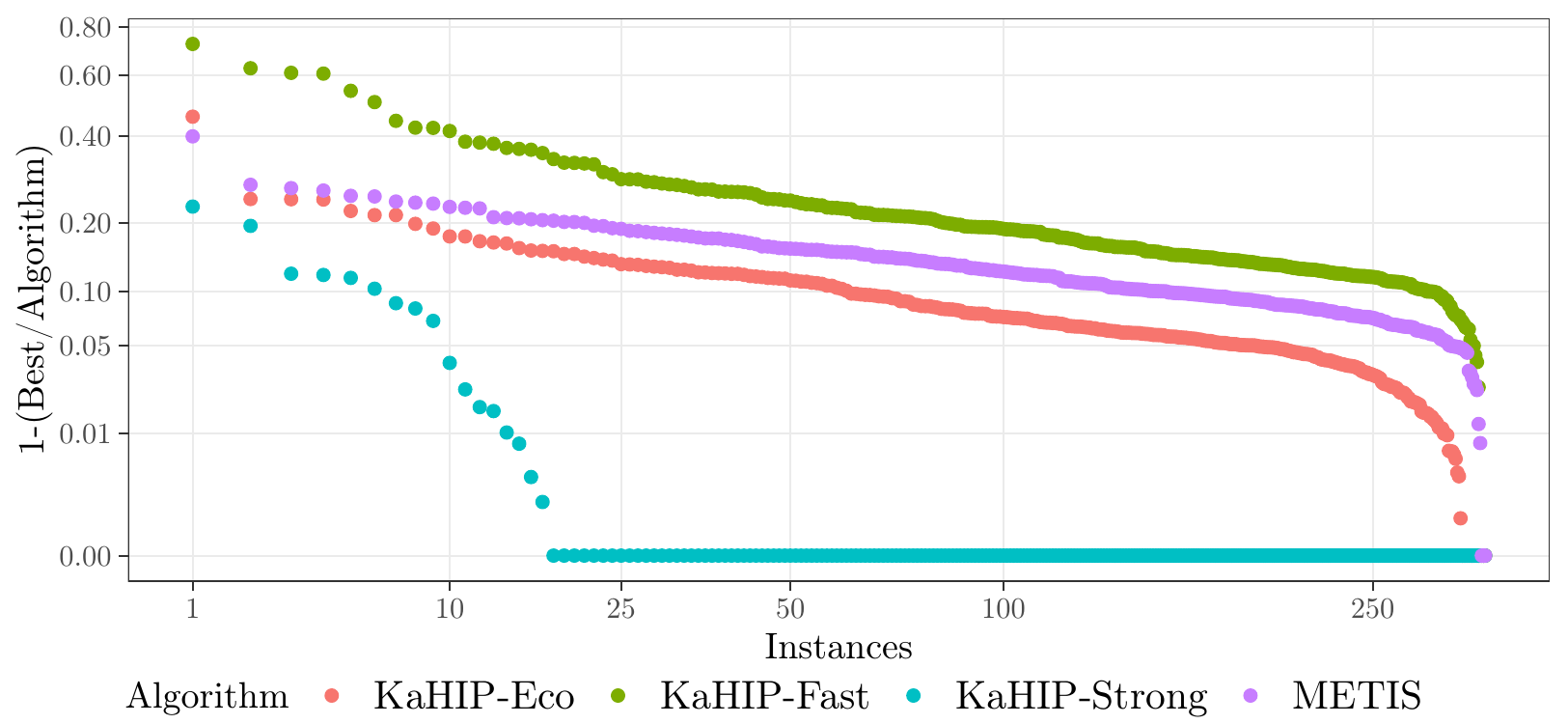}
		\caption{Performance comparison of SPAC+X with different sequential graph partitioners.}
        \label{fig:performance_walshaw_spmv_rhg_seqep}
	\end{subfigure}
	\begin{subfigure}[t]{0.5\columnwidth}
		\includegraphics[width=\columnwidth]{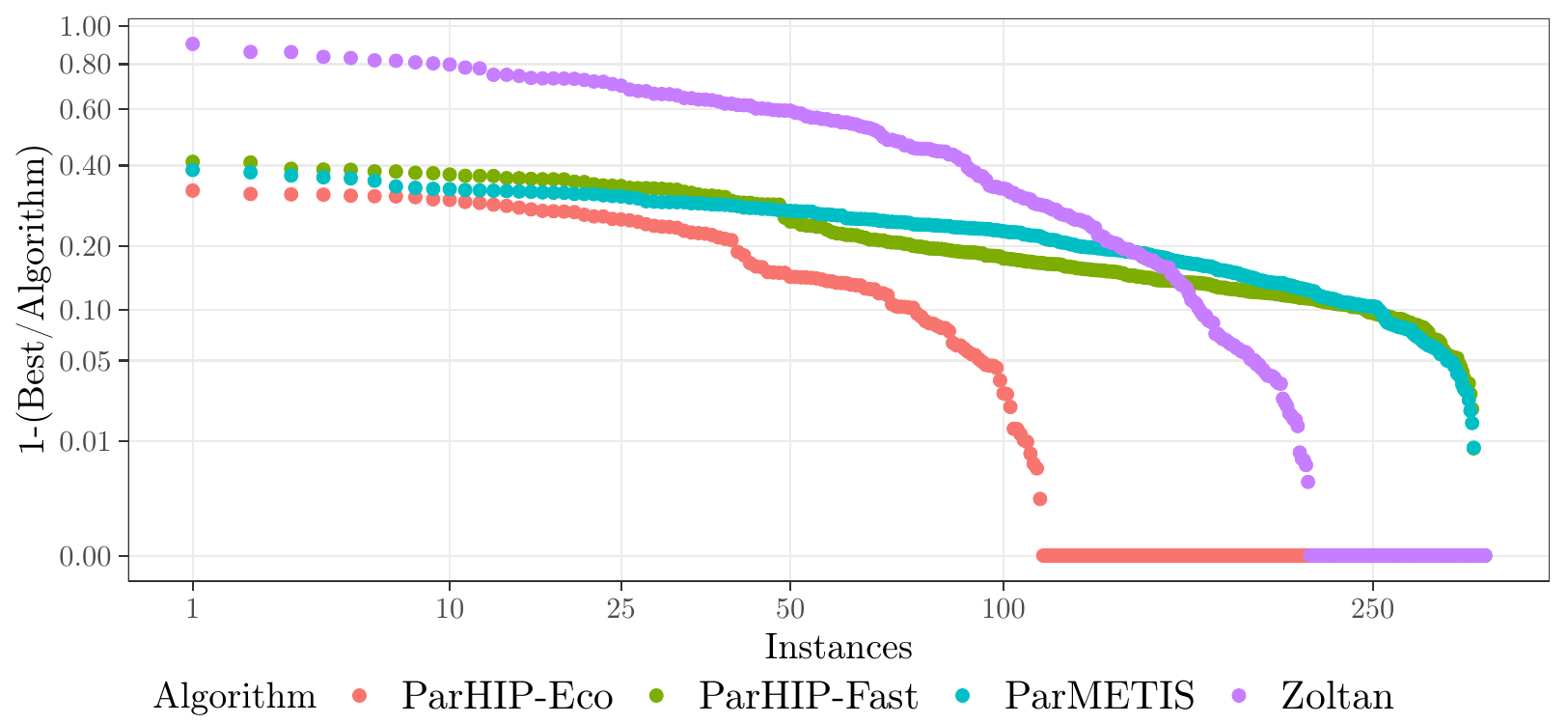}
     	\caption{Performance comparison of dSPAC+X with different distributed graph partitioners and Zoltan.}
        \label{fig:performance_walshaw_spmv_rhg_parep}
    \end{subfigure}
    \begin{subfigure}[t]{0.5\columnwidth}
		\includegraphics[width=\columnwidth]{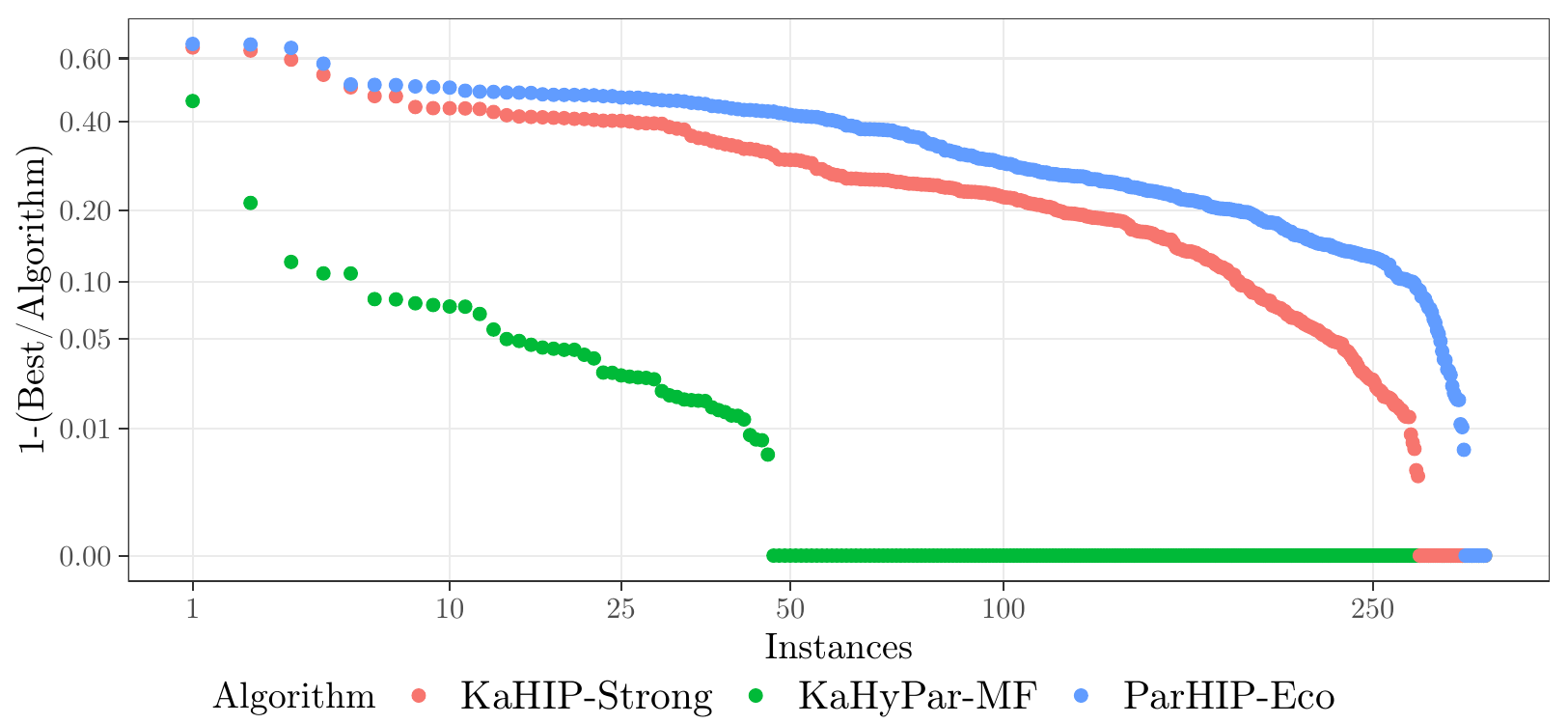}
		\caption{Comparing the best sequential SPAC+X (KaHIP-Strong) and HGP (KaHyPar-MF) approaches with the best dSPAC+X implementation (dSPAC+ParHIP-Eco).}
		\label{fig:all}	
	\end{subfigure}
	\caption{Additional performance plots comparing hypergraph partitioning algorithms (a), sequential SPAC+X approaches (b), dSPAC+X and dHGP approaches (c), as well as the best algorithm of each category (d).}
\end{figure}

\clearpage

\section{Additional Experimental Results for dSPAC+X}\label{app:dspac+x}
\begin{table}[h]
 	\centering
 	\begin{tabular}[10]{|lrrrrrrrrr|}
 \hline
 Graph                 & \multicolumn{9}{c|}{Vertex cut on different numbers of PEs}                                                                                                                                            \\
                       & \numprint{1}     & \numprint{20}    & \numprint{40}    & \numprint{80}    & \numprint{160}    & \numprint{320}    & \numprint{640}    & \numprint{1280}   & \numprint{2560}   \\
 \hline
 \hline
 \multicolumn{10}{|c|}{ParHIP-Fast}                                                                                                                                                                    \\
 \hline
\texttt{in-2004\_spmv} & \numprint{232}   & \numprint{366}   & \numprint{307}   & \numprint{235}   & \numprint{202}    & \numprint{195}    & \numprint{194}    & \numprint{189}    & \numprint{185}    \\
\texttt{circuit5M\_spmv} & --& --& \numprint{1526} & \numprint{1529} & \numprint{1544} & \numprint{1363} & \numprint{1316} & \numprint{1426} & \numprint{1475} \\
\texttt{amazon}        & \numprint{22740} & \numprint{23501} & \numprint{25432} & \numprint{23537} & \numprint{22987}  & \numprint{23577}  & \numprint{23364}  & \numprint{24493}  & \numprint{22788}  \\
\texttt{eu-2005}       & \numprint{1658}  & \numprint{1600}  & \numprint{1478}  & \numprint{1662}  & \numprint{1533}   & \numprint{1562}   & \numprint{1560}   & \numprint{1489}   & \numprint{1435}   \\
\texttt{youtube}       & \numprint{60838} & \numprint{63129} & \numprint{72033} & \numprint{63284} & \numprint{63018}  & \numprint{65684}  & \numprint{63535}  & \numprint{67324}  & \numprint{66867}  \\
\texttt{in-2004}       & \numprint{412}   & \numprint{333}   & \numprint{394}   & \numprint{472}   & \numprint{417}    & \numprint{359}    & \numprint{264}    & \numprint{248}    & \numprint{263}    \\
\texttt{packing}       & \numprint{3137}  & \numprint{3324}  & \numprint{3179}  & \numprint{3282}  & \numprint{3335}   & \numprint{3193}   & \numprint{3423}   & \numprint{3494}   & \numprint{4272}   \\
\texttt{channel}       & --                & \numprint{14575} & \numprint{16551} & \numprint{26542} & \numprint{21975}  & \numprint{19212}  & \numprint{14996}  & \numprint{14779}  & \numprint{15233}  \\
\texttt{road\_central} & \numprint{118}   & \numprint{137}   & \numprint{131}   & \numprint{125}   & \numprint{128}    & \numprint{126}    & \numprint{125}    & \numprint{132}    & \numprint{122}    \\
\texttt{hugebubble-10} & \numprint{1699}  & \numprint{1934}  & \numprint{1836}  & \numprint{1798}  & \numprint{1783}   & \numprint{1773}   & \numprint{1770}   & \numprint{1760}   & \numprint{1752}   \\
\texttt{uk-2002}       & --                & --                & --                & --                & \numprint{130548} & \numprint{110529} & \numprint{126918} & \numprint{115485} & \numprint{135790} \\
\texttt{nlpkkt240}     & --                & --                & --                & --                & --                 & \numprint{187071} & \numprint{181478} & \numprint{178162} & \numprint{181843} \\
\texttt{europe\_osm}   & --                & \numprint{196}   & \numprint{194}   & \numprint{201}   & \numprint{195}    & \numprint{199}    & \numprint{195}    & \numprint{205}    & \numprint{192}    \\
\texttt{rhg20}         & \numprint{567}   & \numprint{890}   & \numprint{700}   & \numprint{756}   & \numprint{657}    & \numprint{637}    & \numprint{580}    & \numprint{581}    & \numprint{590}    \\
\texttt{rhg21}         & \numprint{850}   & \numprint{1361}  & \numprint{1077}  & \numprint{1037}  & \numprint{924}    & \numprint{878}    & \numprint{846}    & \numprint{821}    & \numprint{759}    \\
\texttt{rhg22}         & \numprint{1215}  & \numprint{1809}  & \numprint{1478}  & \numprint{1269}  & \numprint{1126}   & \numprint{1044}   & \numprint{1011}   & \numprint{1020}   & \numprint{965}    \\
\texttt{rhg23}         & --                & \numprint{1960}  & \numprint{2061}  & \numprint{1465}  & \numprint{1503}   & \numprint{1399}   & \numprint{1480}   & \numprint{1304}   & \numprint{1394}   \\
\texttt{rhg24}         & --                & --                & \numprint{2822}  & \numprint{2795}  & \numprint{2688}   & \numprint{2131}   & \numprint{2056}   & \numprint{1907}   & \numprint{1726}   \\
\texttt{rhg25}         & --                & --                & --                & \numprint{3186}  & \numprint{3014}   & \numprint{2535}   & \numprint{2811}   & \numprint{2380}   & \numprint{2324}   \\
\texttt{rhg26}         & --                & --                & --                & --                & --                 & \numprint{3741}   & \numprint{3416}   & \numprint{3281}   & \numprint{3036}   \\
\texttt{rgg27}         & --                & --                & --                & --                & --                 & --                 & \numprint{24213}  & \numprint{22547}  & \numprint{21810}  \\
\texttt{rgg25}         & --                & --                & --                & --                & \numprint{10710}  & \numprint{9611}   & \numprint{8804}   & \numprint{8058}   & \numprint{7777}   \\
\texttt{rgg26}         & --                & --                & --                & --                & --                 & \numprint{16041}  & \numprint{13133}  & \numprint{13140}  & \numprint{12581}  \\
\texttt{rgg28}         & --                & --                & --                & --                & --                 & --                 & --                 & \numprint{17897}  & \numprint{17846}  \\
 \hline 
 \hline
 \multicolumn{10}{|c|}{ParMETIS}                                                                                                                                                                       \\
 \hline
\texttt{in-2004\_spmv} & \numprint{400}   & \numprint{588}   & \numprint{509}   & \numprint{566}   & \numprint{558}    & \numprint{513}    & \numprint{516}    & \numprint{494}    & \numprint{513}    \\
\texttt{circuit5M\_spmv} & --& --& \numprint{1346} & \numprint{1314} & \numprint{1445} & \numprint{1374} & \numprint{1390} & \numprint{1453} & \numprint{1563} \\
\texttt{amazon}        & \numprint{19219} & \numprint{22114} & \numprint{21619} & \numprint{21010} & \numprint{20417}  & \numprint{20358}  & \numprint{20269}  & \numprint{20338}  & \numprint{20174}  \\
\texttt{eu-2005}       & \numprint{2453}  & \numprint{2696}  & \numprint{2339}  & \numprint{2405}  & \numprint{2455}   & \numprint{2855}   & \numprint{2618}   & \numprint{2554}   & \numprint{2756}   \\
\texttt{youtube}       & \numprint{56207} & \numprint{62012} & \numprint{61790} & \numprint{61342} & \numprint{61200}  & \numprint{61151}  & \numprint{60899}  & \numprint{61007}  & \numprint{61249}  \\
\texttt{in-2004}       & \numprint{586}   & \numprint{804}   & \numprint{787}   & \numprint{753}   & \numprint{774}    & \numprint{733}    & \numprint{740}    & \numprint{687}    & \numprint{707}    \\
\texttt{packing}       & \numprint{2962}  & \numprint{3567}  & \numprint{3472}  & \numprint{3602}  & \numprint{3469}   & \numprint{3323}   & \numprint{3327}   & \numprint{3268}   & \numprint{3169}   \\
\texttt{channel}       & \numprint{10995} & \numprint{12057} & \numprint{12185} & \numprint{12147} & \numprint{12209}  & \numprint{12418}  & \numprint{12247}  & \numprint{12145}  & \numprint{12054}  \\
\texttt{road\_central} & \numprint{240}   & \numprint{223}   & \numprint{222}   & \numprint{221}   & \numprint{210}    & \numprint{201}    & \numprint{224}    & \numprint{234}    & \numprint{229}    \\
\texttt{hugebubble-10} & \numprint{1710}  & \numprint{1773}  & \numprint{1831}  & \numprint{1779}  & \numprint{1775}   & \numprint{1746}   & \numprint{1729}   & \numprint{1718}   & \numprint{1713}   \\
\texttt{uk-2002}       & --                & --                & --                & --                & \numprint{72346}  & \numprint{71476}  & \numprint{71084}  & \numprint{71689}  & \numprint{72477}  \\
\texttt{nlpkkt240}     & --                & --                & --                & --                & --                 & \numprint{143048} & \numprint{141947} & \numprint{142466} & \numprint{141491} \\
\texttt{europe\_osm}   & --                & \numprint{257}   & \numprint{257}   & \numprint{240}   & \numprint{239}    & \numprint{241}    & \numprint{248}    & \numprint{246}    & \numprint{233}    \\
\texttt{rhg20}         & \numprint{475}   & \numprint{652}   & \numprint{649}   & \numprint{633}   & \numprint{647}    & \numprint{647}    & \numprint{619}    & \numprint{582}    & \numprint{583}    \\
\texttt{rhg21}         & \numprint{646}   & \numprint{839}   & \numprint{934}   & \numprint{889}   & \numprint{868}    & \numprint{856}    & \numprint{816}    & \numprint{777}    & \numprint{809}    \\
\texttt{rhg22}         & \numprint{761}   & \numprint{1175}  & \numprint{1116}  & \numprint{1217}  & \numprint{1094}   & \numprint{1086}   & \numprint{1062}   & \numprint{1106}   & \numprint{1035}   \\
\texttt{rhg23}         & \numprint{1297}  & \numprint{1598}  & \numprint{1625}  & \numprint{1554}  & \numprint{1666}   & \numprint{1705}   & \numprint{1598}   & \numprint{1545}   & \numprint{1459}   \\
\texttt{rhg24}         & --                & --                & \numprint{2249}  & \numprint{2294}  & \numprint{2277}   & \numprint{2252}   & \numprint{2112}   & \numprint{2144}   & \numprint{2001}   \\
\texttt{rhg25}         & --                & --                & --                & \numprint{2876}  & \numprint{2942}   & \numprint{2952}   & \numprint{2746}   & \numprint{2693}   & \numprint{2755}   \\
\texttt{rhg26}         & --                & --                & --                & --                & --                 & \numprint{3923}   & \numprint{3902}   & \numprint{3785}   & \numprint{3669}   \\
\texttt{rgg27}         & --                & --                & --                & --                & --                 & --                 & \numprint{20141}  & \numprint{19672}  & \numprint{20208}  \\
\texttt{rgg25}         & --                & --                & --                & --                & \numprint{8832}   & \numprint{8882}   & \numprint{8999}   & \numprint{8782}   & \numprint{8725}   \\
\texttt{rgg26}         & --                & --                & --                & --                & --                 & \numprint{13373}  & \numprint{13405}  & \numprint{13097}  & \numprint{12880}  \\
\texttt{rgg28}         & --                & --                & --                & --                & --                 & --                 & --                 & \numprint{29864}  & \numprint{29742}  \\
 	\hline
 	\end{tabular}
 	\caption{Solution quality of dSPAC+X for the largest graphs on an increasing number of PEs.}\label{app:cuts_large}
\end{table}

\begin{figure}[h!]
  \centering
     \begin{subfigure}[t]{0.49\columnwidth}
        \includegraphics[width=\columnwidth]{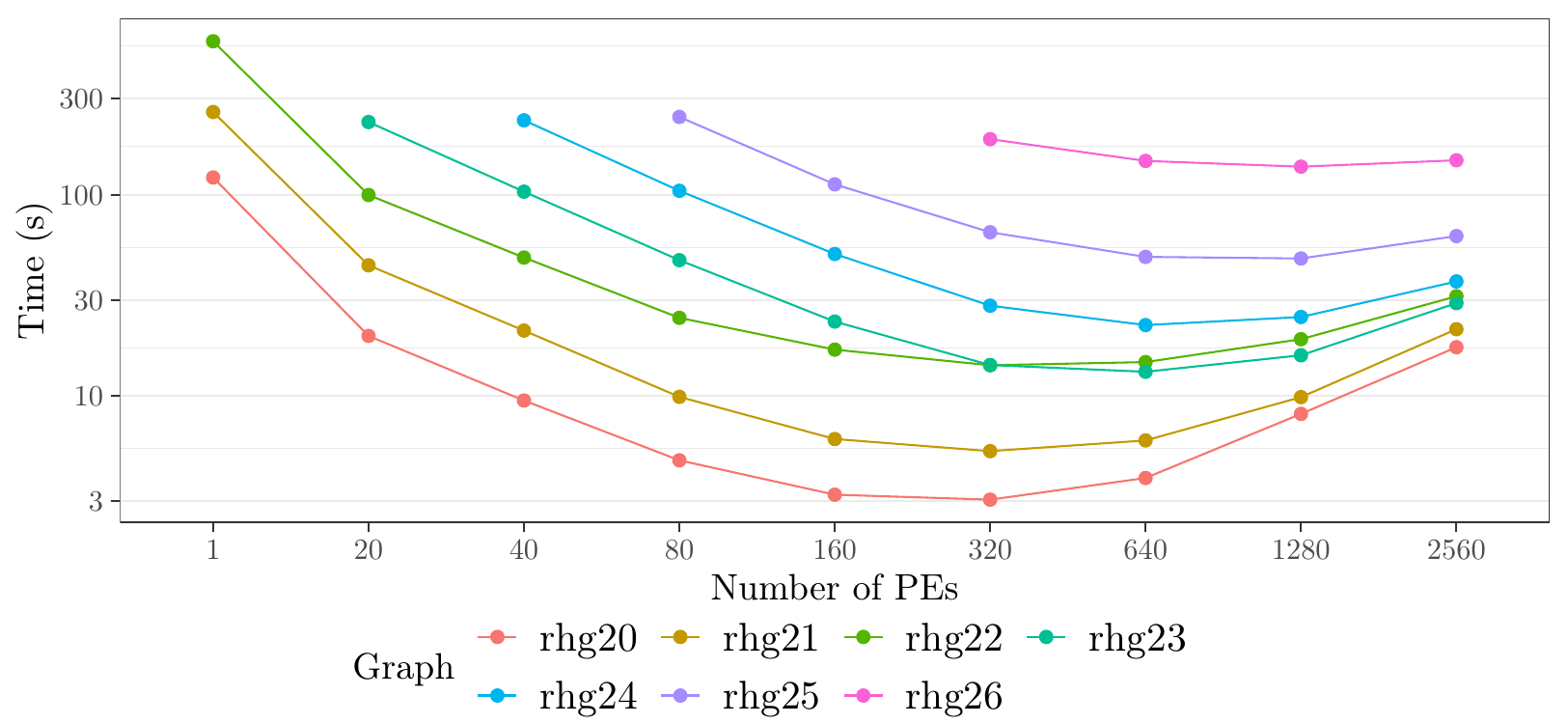}
        \caption{Running time for dSPAC+ParHIP-Fast.}
        \label{fig:scale_parhip_rhg}
    \end{subfigure}
    \begin{subfigure}[t]{0.49\columnwidth}
        \includegraphics[width=\columnwidth]{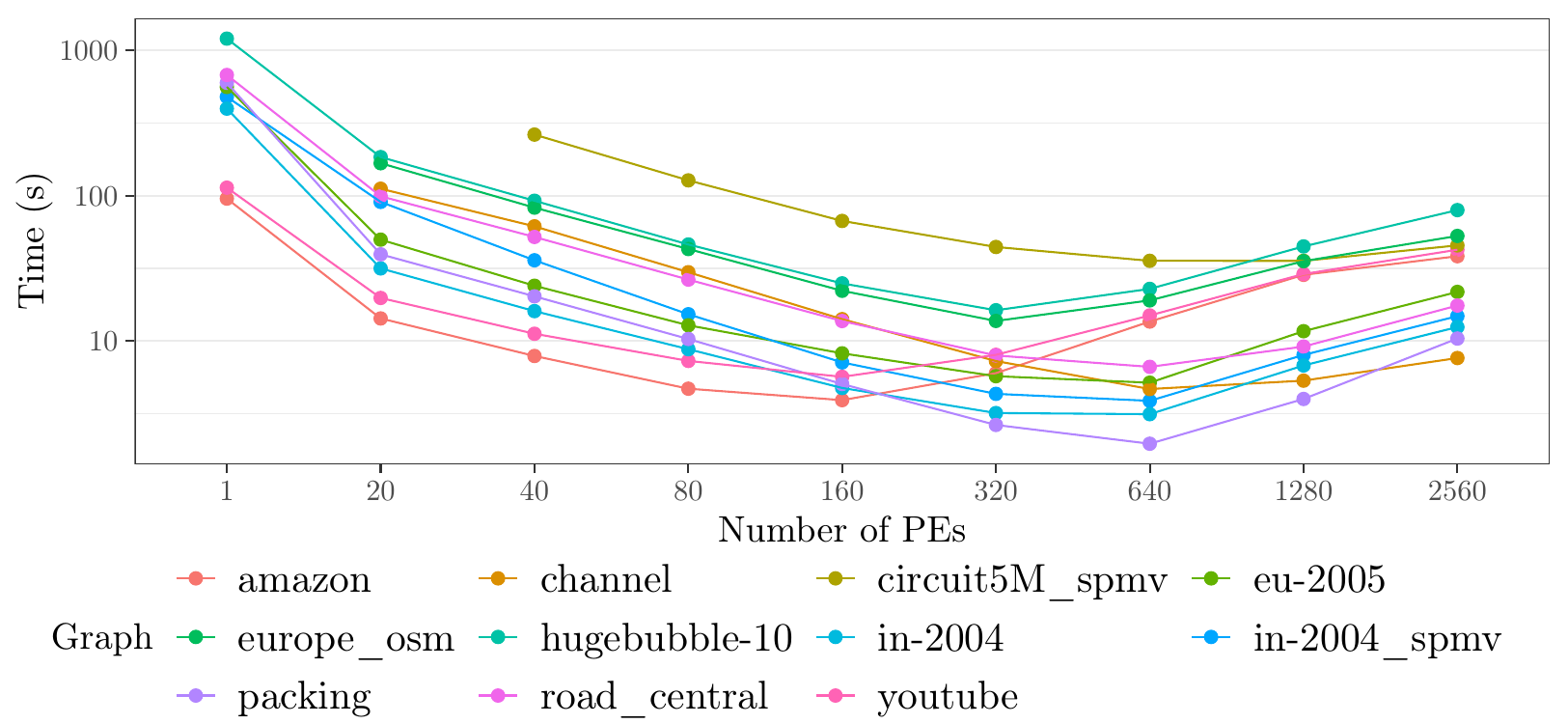}
        \caption{Running time for dSPAC+ParHIP-Fast.}
        \label{fig:scale_parhip_small}
    \end{subfigure}
    \begin{subfigure}[t]{0.49\columnwidth}
        \includegraphics[width=\columnwidth]{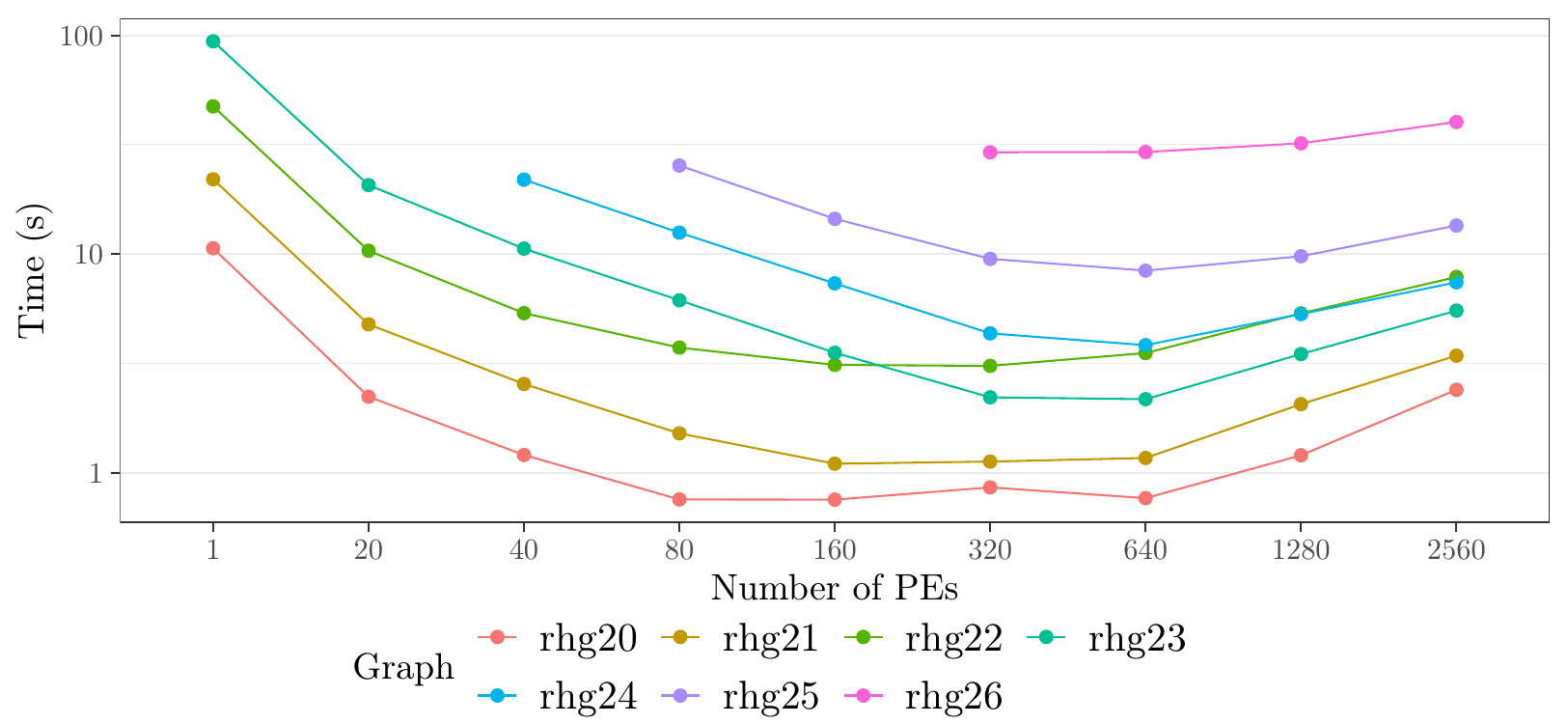}
        \caption{Running time for dSPAC+ParMETIS.}
        \label{fig:scale_parmetis_rhg}
    \end{subfigure}   
    \begin{subfigure}[t]{0.49\columnwidth}
        \includegraphics[width=\columnwidth]{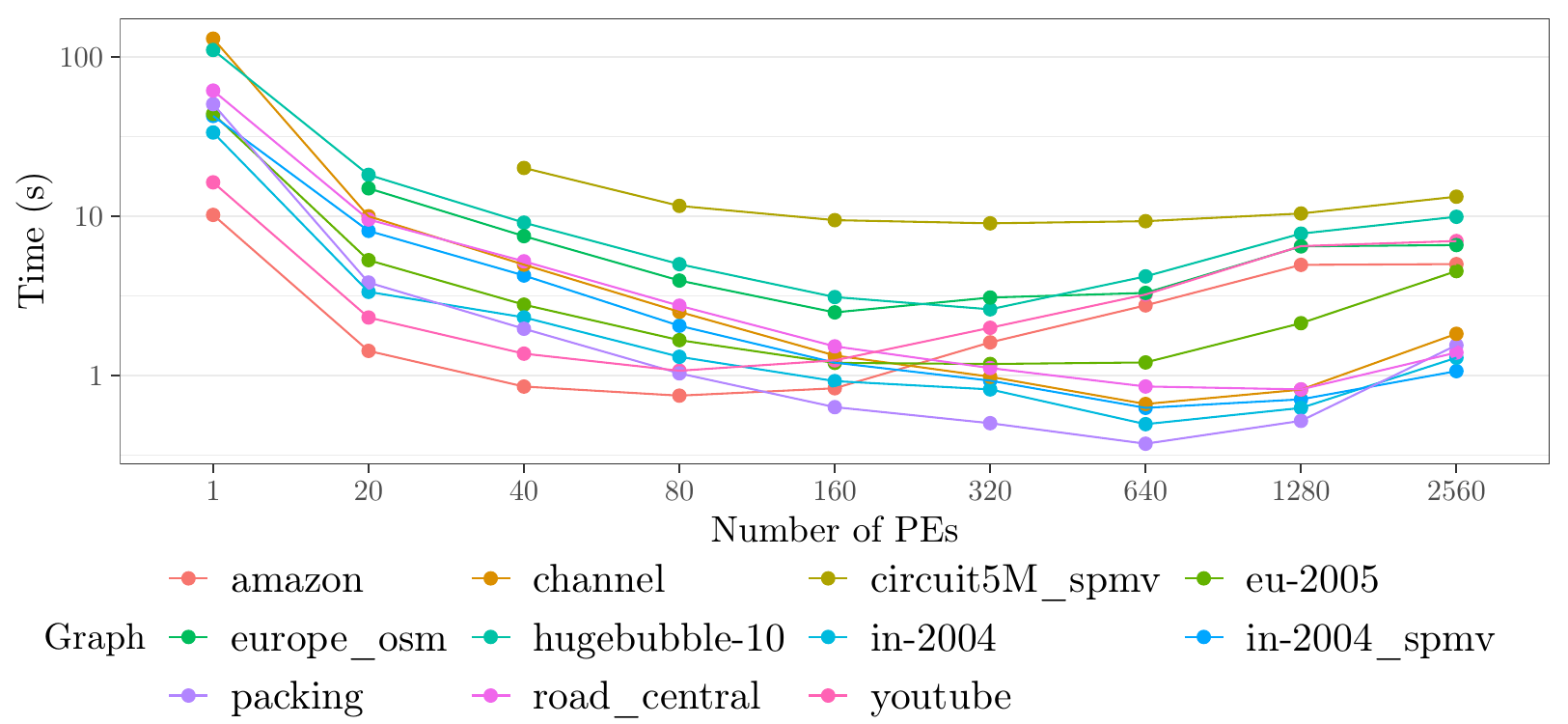}
        \caption{Running time for dSPAC+ParMETIS.}
        \label{fig:scale_parmetis_small}
    \end{subfigure}
    \begin{subfigure}[t]{0.49\columnwidth}
        \includegraphics[width=\columnwidth]{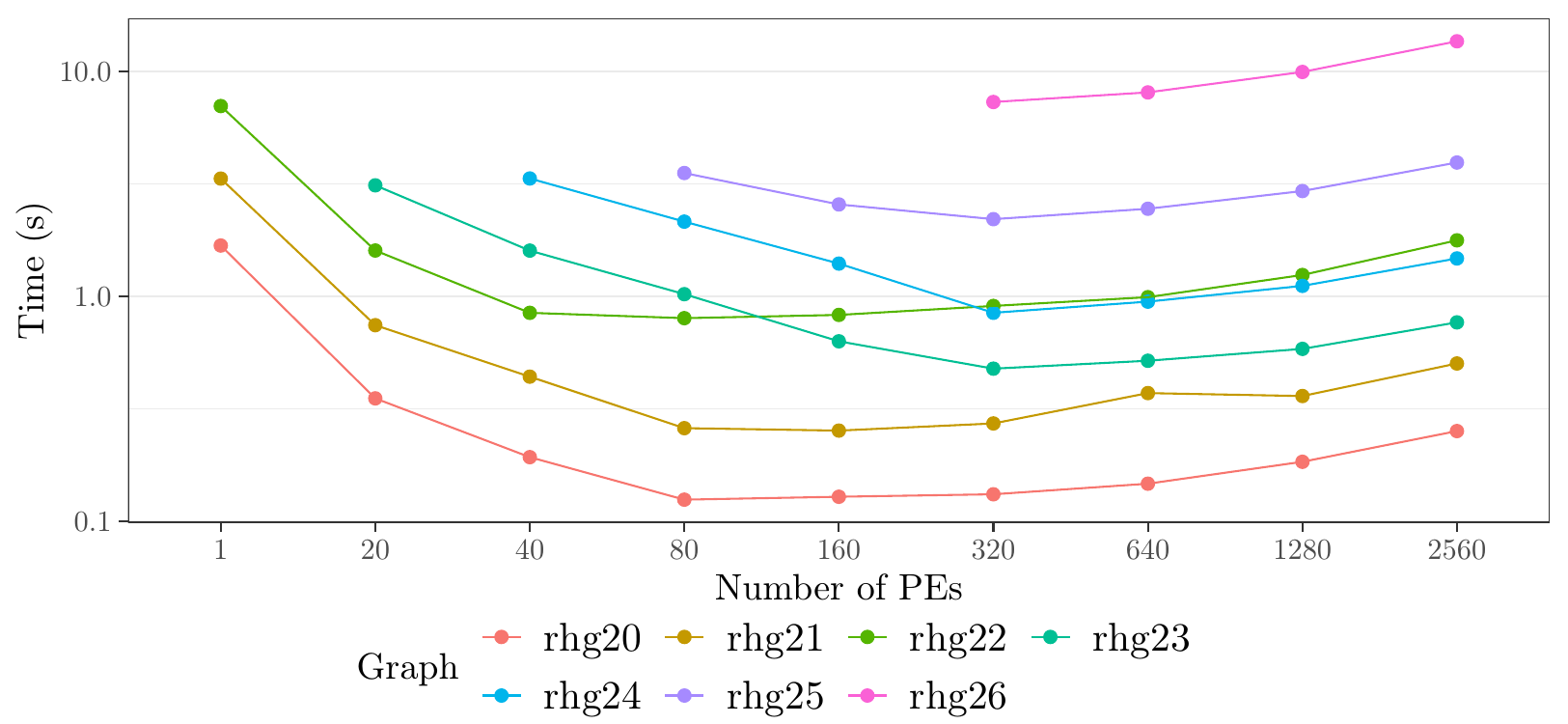}
        \caption{Running time for distributed split graph construction.}
        \label{fig:scale_split_rhg}
    \end{subfigure}
	\begin{subfigure}[t]{0.49\columnwidth}
        \includegraphics[width=\columnwidth]{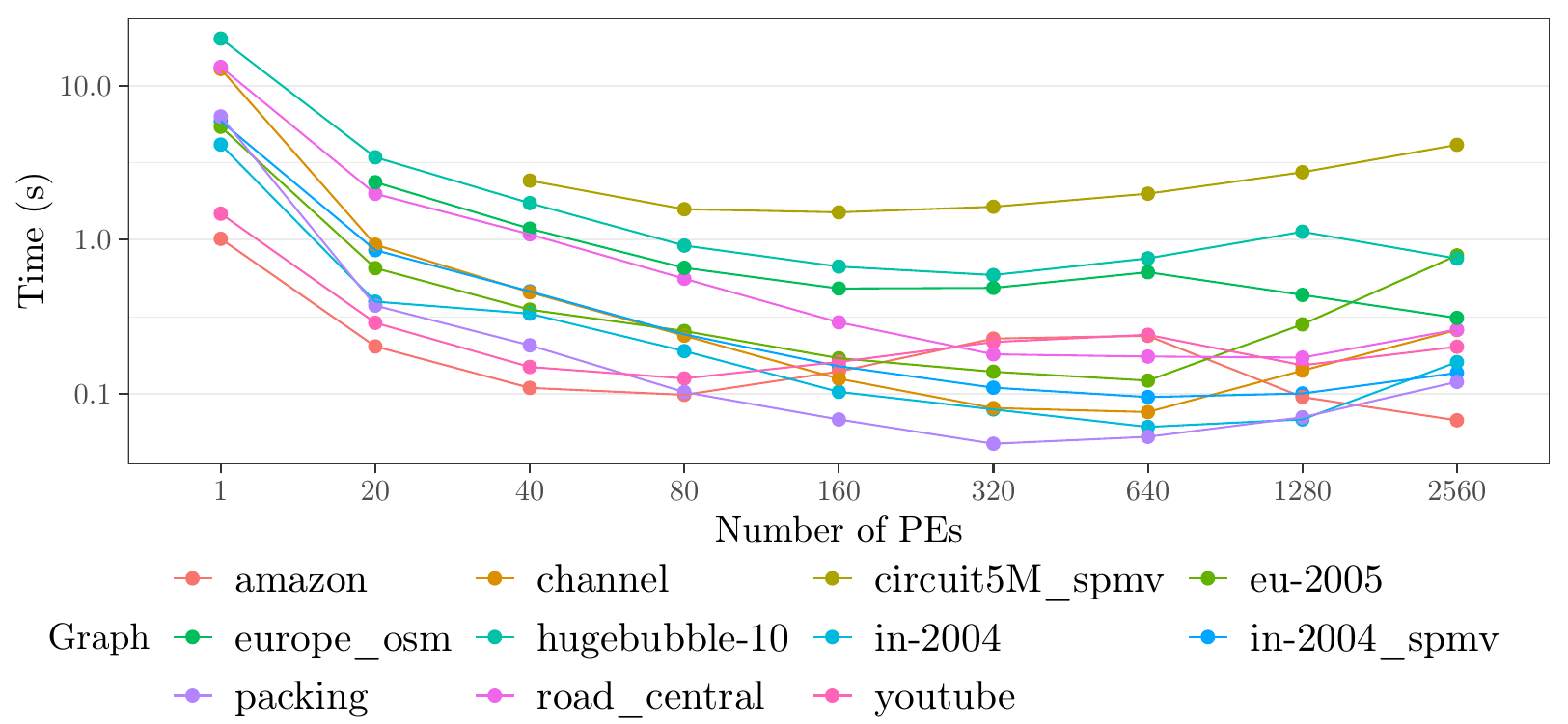}
        \caption{Running time for distributed split graph construction.}
        \label{fig:scale_split_small}
    \end{subfigure}
	\caption{Running times for dSPAC+ParHIP-Fast (a, b), dSPAC+ParMETIS (c, d) and the time it takes to compute the distributed split graph (e, f) on increasing numbers of PE.}
	\label{app:fig:scaling}
\end{figure}
\end{appendix}

\end{document}